\pdfoutput=1

\documentclass[11pt]{scrartcl}
\usepackage[a4paper, total={16cm, 24cm}]{geometry}
\usepackage{natbib}
\setcounter{tocdepth}{2}
\renewcommand\tableofcontents{\listoftoc*{toc}} 
 
\usepackage{authblk}

\author[1]{Niclas Boehmer}
\author[2]{Piotr Faliszewski}
\author[2]{Łukasz Janeczko}
\author[3]{\authorcr Dominik Peters} 
\author[4]{Grzegorz Pierczyński}
\author[5]{Šimon Schierreich}
\author[4]{\authorcr Piotr Skowron}
\author[2]{Stanisław Szufa}
\affil[1]{Harvard University} 
\affil[2]{AGH University, Kraków} 
\affil[3]{CNRS, LAMSADE, Universit\'e Paris Dauphine - PSL}
\affil[4]{University of Warsaw}
\affil[5]{Czech Technical University in Prague}

\usepackage[pagebackref]{hyperref}
\hypersetup{
	pdfencoding=auto, 
	psdextra,
	colorlinks=true,
	citecolor=green!50!black,
	linkcolor=red!60!black,
}

\usepackage{natbib}  
\usepackage{caption} 
\frenchspacing  
\usepackage{algorithm}
\usepackage{algorithmic}

\usepackage{graphicx} 
\usepackage{microtype}
\usepackage{amsmath}
\usepackage{amssymb}
\usepackage{subcaption}
\usepackage{nicefrac}
\usepackage[textsize=tiny]{todonotes}
\usepackage{amsthm}
\usepackage{booktabs}
\usepackage{paralist}
\usepackage{thm-restate}

\usepackage[pagebackref]{hyperref}
\hypersetup{
		pdfencoding=auto, 
		psdextra,
		colorlinks=true,
		citecolor=green!40!black,
		linkcolor=red!50!black,
		urlcolor=blue!80!black
	}

\usepackage[nameinlink]{cleveref}
\captionsetup[subfigure]{subrefformat=simple,labelformat=simple}

\usepackage{newfloat}
\usepackage{listings}
\DeclareCaptionStyle{ruled}{labelfont=normalfont,labelsep=colon,strut=off} 
\lstset{%
	basicstyle={\footnotesize\ttfamily},
	numbers=left,numberstyle=\footnotesize,xleftmargin=2em,
	aboveskip=0pt,belowskip=0pt,%
	showstringspaces=false,tabsize=2,breaklines=true}
\floatstyle{ruled}
\newfloat{listing}{tb}{lst}{}
\floatname{listing}{Listing}

\setcounter{secnumdepth}{2} 

\title{Evaluation of Project Performance in Participatory Budgeting}

\newcommand{\calS}{\mathcal{S}}

\sloppy

\usepackage{pgfplots}

\newcommand{\figurecut}[1]{}
\newcommand{\cutfornow}[1]{}
\setlength{\marginparwidth}{1.5cm}

\newcommand{\np}{\ensuremath{{\mathrm{NP}}}}
\newcommand{\conp}{\ensuremath{{\mathrm{coNP}}}}

\newcommand{\sharpp}{\ensuremath{{\mathrm{\#P}}}}

\newcommand{\fpt}{{{\mathrm{FPT}}}}

\newcommand{\av}{\textsc{AV}}

\newcommand{\phragmen}{\textsc{Phragm{\'e}n}}
\newcommand{\eqphragmen}{\textsc{Eq/Phragm{\'e}n}}

\newcommand{\equalshares}{\textsc{Equal-Shares}}

\newcommand{\greedyav}{\textsc{greedy}\textsc{AV}}

\newcommand{\ph}{\textsc{Ph}}
\newcommand{\eq}{\textsc{Eq}}

\newcommand{\costred}{{{\mathrm{cost\hbox{-}red}}}}
\newcommand{\optimistadd}{\mathrm{optimist\hbox{-}add}}
\newcommand{\pessimistadd}{{{\mathrm{pessimist\hbox{-}add}}}}
\newcommand{\randomadd}{{{\mathrm{50\%\hbox{-}add}}}}
\newcommand{\singletonadd}{{{\mathrm{singleton\hbox{-}add}}}}
\newcommand{\rivalred}{{{\mathrm{rival\hbox{-}red}}}}

\newcommand{\setCover}{\textsc{Set-Cover}}

\newcommand{\rxthreec}{\textsc{RX3C}}

\newcommand{\naturals}{\mathbb{N}}

\newtheorem{theorem}{Theorem}[section]
\newtheorem{definition}[theorem]{Definition}

\DeclareMathOperator{\cost}{cost}

\newcommand{\ssnote}[1]{}
\newcommand{\pfnote}[1]{}
\newcommand{\dpnote}[1]{}

\pagestyle{plain}

\title{Evaluation of Project Performance in Participatory Budgeting}
\date{\vspace{-1cm}}
	
\begin{document}

\maketitle

\begin{abstract}
	\begin{center}
		\textbf{\textsf{Abstract}} \smallskip
	\end{center}
  We study ways of evaluating the performance of losing projects in
  participatory budgeting (PB) elections by seeking actions that would
  have led to their victory.  We focus on lowering the projects'
  costs, obtaining additional approvals for them, and asking
  supporters to refrain from approving other projects: The larger a
  change is needed, the less successful is the given project. We seek
  efficient algorithms for computing our measures and we analyze and
  compare them experimentally. We focus on the $\greedyav$,
  $\phragmen$, and $\equalshares$ PB rules.
\end{abstract}

\vspace{13pt}
\hrule
\vspace{8pt}
{\small\tableofcontents}
\vspace{20pt}
\hrule
\newpage

\section{Introduction}\label{sec:intro}

The idea of participatory budgeting (PB) is to let members of a local
community---such as a city or its district---decide how a certain
budget $B$ should be
spent~\citep{cab:j:participatory-budgeting,goe-kri-sak-ait:c:knapsack-voting,rey-mal:t:pb-survey}. To
this end, some members of the community first submit projects,
together with the costs of implementing them, and then the whole
community votes on which of them should be funded, typically by
indicating which projects they approve (but see, e.g., the work of
\citet{fai-ben-gal:c:pb-formats} for an analysis of other ballot formats). The funded projects
are selected using the $\greedyav$ rule, which looks at the projects
one by one, starting with those that received the most approvals,
and selects a project if its cost is within the
still-available budget. For example, consider a PB instance with
projects $a$, $b$, $c$,~$d$,~$e$, 10 voters, and budget $B = 10$ (we
consider a fairly small city), where the project costs and the votes
are as follows:
\begin{center}
  \begin{tabular}{cc|cccccccccc}
    \multicolumn{2}{c}{} & \multicolumn{10}{c}{voters} \\
    proj. & cost & $x_1$ & $x_2$ & $x_3$ & $y_1$ & $y_2$ & $y_3$ & $z_1$ & $z_2$ & $z_3$ & $z_4$ \\
    \midrule
    $a$ & 7   & \checkmark & \checkmark & \checkmark & \checkmark & \checkmark & \checkmark & - & - & - & - \\                            
    $b$ & 4   & - & - & - & - & - & - & \checkmark & \checkmark & \checkmark & \checkmark \\                                 
    $c$ & 3   & - & - & - & \checkmark & \checkmark & \checkmark & - & - & - & - \\                                 
    $d$ & 2   & - & \checkmark & \checkmark & - & - & - & - & - & - & - \\                                 
    $e$ & 2   & \checkmark & - & - & - & - & - & - & - & - & - \\       
    
    \bottomrule
  \end{tabular}
\end{center}\medskip

\noindent In particular, project $a$ costs $7$ units and is approved
by 6 voters ($x_1, x_2, x_3$ and $y_1, y_2, y_3$). $\greedyav$ first
considers project $a$ (as it is approved by the largest number of
voters) and selects it.  Then it looks at project $b$ (approved by $4$
voters, costing $4$ units), but it does not select it as there are
only~$3$ units of budget left.  Next it considers project $c$ and
selects it, using up the whole remaining budget.

When the city announces the results, it may wish to present additional information to explain why projects lost or won.
In particular, proposers and supporters of losing projects may wish to know how close their project was to winning.
A city could do this by publishing an ``\emph{information package}'' for each losing project.
In the example, for the losing project $b$, the package could note that the project would have won had
it been cheaper by one unit, or had it
been supported by two or three additional voters (depending on
how ties are broken).
Having such an information package would 
make the whole
process more transparent.

The $\greedyav$ rule has the disadvantage that it frequently underserves minorities (indeed, in our
example voters $z_1, \ldots, z_4$ do not approve any winning projects, even though
they form $40\%$ of the electorate).
Researchers have proposed new 
PB rules, such as
$\equalshares$~\citep{pet-sko:c:welfarism-mes,pet-pie-sko:c:pb-mes}
and
$\phragmen$~\citep{bri-fre-jan-lac:c:phragmen,los-chr-gro:c:phragmen-pb},
that aim at proportional representation of the voters and avoid this problem.
These rules are more involved than $\greedyav$, making it important to explain the outcome to participants. We study performance measures like the ones mentioned above for PBs using these rules, to help produce good information packages.
This will be particularly useful for $\equalshares$, which was recently used
in real-life PB elections in Wieliczka (Poland) and Aarau
(Switzerland).

While the measure based on reducing projects' costs applies directly to proportional rules,
there are some challenges in adapting the measure of how many more approvals are necessary for funding.
The reason is
that under $\greedyav$ it only matters \emph{how many} approvals we
add, but under proportional rules it is also important \emph{where} we
add them. We illustrate this issue using $\equalshares$, which roughly
works as follows:
Given $n$~voters and budget~$B$, the rule assigns $\nicefrac{B}{n}$
units of budget to each voter. Then, it proceeds in rounds, where in
each round
a group of voters \emph{buys} a project if (1) they jointly have
sufficient funds, and (2) they are the largest group that can afford a
project at this point of time; the voters from this group share the
cost of the project equally among themselves.\footnote{If there are voters who do not have enough funds left to
  pay their full share, they pay the fraction that they have left, but
  they also count as a fraction of a voter toward the group's size.}  In our example,
each voter obtains one unit of
budget, and in the first round 
voters $z_1, z_2, z_3, z_4$ buy project $b$ (they are the largest
group that can afford a project; project $a$ has more approvals, but
its supporters do not have sufficient funds). Then, voters $y_1$,
$y_2$, and $y_3$ buy project $c$ and, finally, voters $x_2$ and $x_3$
buy project $d$. The final outcome is $\{b,c,d\}$, under which only
voter $x_1$ is left without an approved winning project.
Now let us see how this outcome would change if $e$ got some
additional approvals.  If it got approvals from $x_2$ and $x_3$, then
$\equalshares$ would certainly select it instead of $d$.
Yet, if $e$ got two additional approvals from $z_1$ and $z_2$ then
it would still lose; these voters would spend all their funds on $b$
and would not 
help $e$.

We address this issue with the adding-approvals measure
by considering (a)~the smallest number of approval additions needed
for a project's victory (for the right selection of voters
who add the approvals), (b)~the smallest number of approval additions
that suffice for a victory no matter which voters perform them, as
well as (c)~a variant based on randomization and (d)~a variant based
on adding new voters.

Finally, under a proportional rule, 
a project may also lose because
it is supported by voters for whom the rule chooses other projects
that they approve. Indeed, this is exactly why $\equalshares$ would
not select project~$e$ even if it were additionally approved by
$z_1$ and $z_2$. If these voters did not approve $b$, their support
for $e$ would suffice for its victory. Knowing that a project could
have won if some of its supporters refrained from voting for other
projects is useful both for voters to understand how the rule works, and for project proposers for planning their election campaigns.

\paragraph{Technical Contribution.}
We are interested in two aspects of our measures. First, as real-life
PB instances can include hundreds of projects and tens of
thousands of voters, we seek fast algorithms for computing them in practice.
As some of our measures are NP-hard to compute, we find $\fpt$
algorithms for them that work well in realistic settings.

Second, we compute our measures for projects from real-life PB
instances from Pabulib~\citep{fal-fli-pet-pie-sko-sto-szu-tal:c:pabulib}.  We find that
measures of the same type are strongly correlated, but that measures of different types
may disagree about how well a project performed. All measures
provide helpful insights, which we demonstrate
by constructing an example information package based on the Pabulib vote data from the
2023 Wieliczka ``Green Million'' PB, where $\equalshares$ was used for the first time.

Proofs of all statements and much additional information on our experiments can be found in the appendix.

\section{Preliminaries}\label{sec:prelim}

A \emph{participatory budgeting} (PB) instance $E = (P,V,B)$ consists
of a set of projects $P = \{p_1, \ldots, p_m\}$, a set of voters
$V = \{v_1, \ldots, v_n\}$, and budget $B \in \naturals$.  Each voter
$v$ has a set $A(v) \subseteq P$ of projects that he or she approves
(referred to as his or her \emph{approval set}), and each project
$p \in P$ has a price for implementing it, denoted $\cost(p)$.
We extend this notation 
so that for a project $p$, $A(p)$ is the set of voters who approve
it, and we refer to $|A(p)|$ as the approval score of project~$p$.
A set
$S \subseteq P$ of projects is feasible if its cost, denoted as
$\cost(S) = \sum_{p \in S}\cost(p)$, is at most~$B$.  
A PB rule is a function that given a PB instance outputs a feasible
set $S$ of selected projects (i.e., our rules are resolute and, so,
give unique outcomes). We refer to projects in $S$ as the
\emph{selected} (or, \emph{funded}) ones, and to the remaining ones
as \emph{losing}.

We consider three PB rules, $\greedyav$, $\phragmen$
\citep{bri-fre-jan-lac:c:phragmen,los-chr-gro:c:phragmen-pb}, and
$\equalshares$~\citep{pet-sko:c:welfarism-mes,pet-pie-sko:c:pb-mes}. Each
of them starts with an empty set of projects $W$, performs a sequence
of rounds, where in each round it extends $W$ with a single project,
and eventually outputs $W$ as the final outcome.  Whenever they
encounter an internal tie (i.e., two or more projects that fulfill a
given condition), they break it using a given, prespecified order over
the projects. 
Our rules work as follows, letting $E = (P,V,B)$ be the
input PB instance:

\begin{description}
\item[$\greedyav$ (\av).] In each round, $\greedyav$ considers pro\-ject~$p$
  with the highest approval score, that it has not considered yet.  If
  $\cost(W)+\cost(p) \leq B$ (i.e., if it can afford to fund $p$) then
  it includes $p$ in $W$. The rule terminates upon considering all the
  projects.

\item[$\phragmen$ (\ph).]

  The voters start with empty virtual bank accounts, but receive funds
  in a
  continuous manner, one unit of funds per one unit of time. As soon as there is a project
  $p \in P \setminus W$ such that the voters in $A(p)$ have $\cost(p)$
  funds in total and $\cost(W)+\cost(p) \leq B$, the rule includes $p$
  in $W$ and resets the bank accounts of the voters from $A(p)$ to
  zero (these voters \emph{buy} the project). The process
  stops when for every
  project $p$ with at least one approval it holds that
  $\cost(W) + \cost(p) > B$.

\item[$\equalshares$ (\eq).] This rule also uses voters' virtual bank
  accounts, but it initiates them to $B/|V|$ per voter and does not
  provide further funds. Each round proceeds as follows, where $b_i$
  is the current account balance of voter $v_i \in V$.
  The idea is to select a project that its supporters can afford, and
  such that each of them has to cover as small a fraction of its cost
  as possible.  Formally, a project $p \in P \setminus W$ is
  affordable if there is $q(p) \in [0,1]$ such that:
  \[
    \textstyle
    \sum_{v_i \in A(p)} \min\big( b_i, q(p) \cdot \cost(p)\big) = \cost(p).
  \]
  For each voter $v_i \in A(p)$, we let $q_i(p)$ be the fraction of
  $\cost(p)$ that $v_i$ needs to cover; it is $q(p)$ if
  $b_i \geq q(p) \cdot \cost(p)$ (i.e., if $v_i$ can afford its full
  share) and it is $\nicefrac{b_i}{\cost(p)}$ otherwise.  The rule
  selects an affordable project~$p$ with the smallest value of $q(p)$,
  includes it in $W$, and charges each $v_i \in A(p)$ with
  $q_i(p) \cdot \cost(p)$ (if there are no affordable projects, then
  the rule terminates).\footnote{In the language of \citet{bri-for-lac-mal-pet:c:prop-approval-pb}, this definition is based on \textit{cost utilities}. The definition from the introduction matches the formal definition if we define the weighted size of the
    voter group that approves~$p$ as
    $\sum_{v_i \in A(p)} \nicefrac{q_i(p)}{q(p)}$. Each voter that
    contributed the full share counts as~$1$ in this sum, but those
    who contributed less count as respective fractions. One can verify
    that $\equalshares$ selects a project approved by a group
    with the largest weighted size.}
\end{description}
A well-known issue of $\equalshares$ is that it is not
exhaustive, i.e., upon termination there still may be sufficient
budget left to fund more projects. Hence, in practice one needs to apply one of several \emph{completion methods}. In most
of our experiments we do the following: When $\equalshares$
terminates, we start the $\phragmen$ rule, but with voters' bank
accounts initiated to their value at the end of
$\equalshares$. We call this rule
$\eqphragmen$.
In Wieliczka and Aarau, a different completion method was used, which, unfortunately, is too
computationally intensive for our full set of experiments (see \Cref{app:prelim}).

Each of the above-described PB rules can be computed in polynomial
time using 
a \emph{round-based} algorithm, which executes each round following
the definition.
Many of our measures can be computed while running a round-based algorithm, by performing some additional steps in
each round.
\begin{definition}
  Let $f$ be a PB rule. Let $\mathrm{measure}_E(p)$ be
  a function that takes as input a PB instance $E$ and project~$p$,
  and let $t \colon \naturals \rightarrow \naturals$ be some
  function. We say that $\mathrm{measure}_E(p)$ can be computed
  \emph{alongside $f(E)$ at a cost of $t(|E|)$ per round} if it is
  possible to compute its value using a round-based algorithm for
  $f(E)$, extended to perform at most $t(|E|)$ additional
  computational steps in each round.
\end{definition}

\section{The Measures and How to Compute Them}\label{sec:measures}

\begin{table}
	\centering
  \begin{tabular}{r|ccc}
    \toprule
    measure & $\av$ & $\ph$ & $\eq$ \\
    \midrule
    $\costred$      & \small along/$O(1)$ & \small along/$O(1)$        & \small along/$O(1)$ \\
    $\optimistadd$  & \small along/$O(1)$ & \small along/$O(n \log n)$ & \small along/$O(n \log n)$ \\
    $\randomadd$    & \small along/$O(1)$ & {\small sampling}          & {\small sampling} \\
    $\pessimistadd$ & \small along/$O(1)$ & \small $\np$-com./$\fpt$   & \small $\fpt$ \\
    $\singletonadd$ & \small along/$O(1)$ &  \small along/$O(1)$       & \small brute-force \\
    $\rivalred$     & {\small sampling}   & {\small sampling}          & {\small sampling}  \\    
    \bottomrule
  \end{tabular}
\caption{\label{tab:algorithms}Summary of our algorithms for computing
  the performance measures. By along/$O(1)$ and along/$O(n \log n)$ we
  mean that the measure can be computed alongside the rule, 
  with $O(1)$ or $O(n \log n)$ additional cost per round (where $n$ is
  the number of voters). By sampling, we mean algorithms based on
  simulating a given action a number of times. By $\fpt$, we mean the algorithm from \Cref{thm:fpt}. By
  brute-force, we mean adding singleton voters one by one.}
\end{table}

In this section, we describe our measures and provide ways of
computing them. While we also analyze their worst-case computational
complexity, our focus is on obtaining practically usable algorithms
that can be applied to PB elections from
Pabulib~\citep{fal-fli-pet-pie-sko-sto-szu-tal:c:pabulib}.
We focus on $\greedyav$ ($\av$), $\phragmen$ ($\ph$), and
$\equalshares$ ($\eq$) as this allows for a clean presentation.
By combining algorithms for $\eq$ and $\ph$, one can obtain algorithms that work for $\eqphragmen$.
We summarize our algorithms in \Cref{tab:algorithms}.

The common feature of our measures is that they correspond to specific
actions that either the voters or the project proposers could have
taken in the election.  In this respect, they are closely related to
the
margin-of-victory~\citep{mag-riv-she-wag:c:stv-bribery,car:c:stv-margin-of-victory,xia:margin-of-victory}
and, more broadly, bribery
notions~\citep{fal-hem-hem:j:bribery,fal-rot:b:control-bribery,yan:c:approval-multiwinner-destructive-bribery}.  In
particular, we borrow ideas from the work of
\citet{fal-sko-tal:c:bribery-multiwinner-success} about bribery in multiwinner elections.

Except for rivalry reduction (see \Cref{sec:rivalry-red}), our
measures are well-defined in all but a few pathological cases (such as
a project without any approvals). We list these issues
in~\Cref{sec:undefined}.

\subsection{Cost-Reduction Measure}

Our conceptually simplest measure is the one based on reducing the
project's cost. The measure was also studied by \citet{bau-boe-hil:c:pb-manip}. Formally, we define it as follows.

\begin{definition}
	Let $f$ be a PB rule, let $E = (P,V,B)$ be a PB instance, and let
	$p \in P \setminus f(E)$ be a losing project. We define the
	cost-reduction measure of $p$ in $E$, denoted $\smash{\costred^f_E(p)}$, to
	be the largest value such that if we replace~$p$'s cost with it,
	then $f$ selects~$p$. 
\end{definition}

\noindent
That is, $\costred_E(p)$ is the project's cost after the smallest
possible reduction that gets $p$ funded (we drop the superscript
denoting the rule when it is clear from the context).

For $\av$, $\ph$, and $\eq$, it is immediate that we can compute the cost-reduction measure in
polynomial time using binary search, but it would require recomputing
the rules multiple times. 
Instead, we compute it alongside our rules, for each round
finding the largest cost at which our project can be selected right
then, and outputting the largest of these
values. 
A similar algorithm is used by \citet{bau-boe-hil:c:pb-manip} for $\av$.

\begin{restatable}{proposition}{propCostRedPoly}\label{prop:costRed:poly}
  For $\av$, $\ph$, and $\eq$,
  $\costred_E(p)$ can be computed alongside the rule, at an $O(1)$
  cost per round.
\end{restatable}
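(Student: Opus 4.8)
The plan is to argue, rule by rule, that the round-based algorithm for $f$ can be augmented to track, in each round, the largest cost at which $p$ could have been selected at that point, and that this bookkeeping takes only $O(1)$ steps per round. The key observation is that $\costred_E(p)$ equals the maximum over all rounds $r$ (including the hypothetical ``extra'' round after the rule would normally terminate) of the largest cost $c_r$ such that, if $p$ had cost $c_r$, it would be selected in round $r$ given the state of the algorithm at the start of that round. Since lowering $p$'s cost only ever helps $p$ and, crucially, does not change the behaviour of the rule in rounds where $p$ is \emph{not} selected (here one needs a short monotonicity argument: the other projects are processed identically as long as $p$ is not picked, because none of the rules' decisions about a project $q \neq p$ depend on $\cost(p)$ until $p$ is actually selected), the reduced-cost run agrees with the original run up to the first round in which $p$ becomes eligible; so it suffices to compute $c_r$ for each $r$ in the original run and take the maximum.

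It then remains to check, for each rule, that $c_r$ is computable in $O(1)$ time from the per-round state. For $\av$, project $p$ is considered in a fixed round determined by its approval score, and in that single round it is selected iff its cost is at most $B - \cost(W)$; so $c_r$ is just $B - \cost(W)$ in that one round and $-\infty$ (or undefined) otherwise, trivially $O(1)$. For $\ph$, in round $r$ the voters in $A(p)$ have some total accumulated budget $t_r \cdot |A(p)|$ minus whatever was reset, but more simply the algorithm already knows the current total funds $s_r$ of the voters in $A(p)$ at any continuous time; $p$ with cost $c$ would be bought at the first time its supporters accumulate $c$ funds, and it would actually be included as the round-$r$ pick iff $c \le B - \cost(W)$ and no other project is bought strictly earlier. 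The largest such $c$ that still gets $p$ in \emph{as the round-$r$ project} is $\min(B-\cost(W), \text{(funds of }A(p)\text{ at the time the round-}r\text{ project is actually bought)})$, and since Phragmén's round-based algorithm already computes the purchase time and the affected account balances, reading off the total balance of $A(p)$ at that instant is $O(1)$ (maintaining a running sum of the $A(p)$-accounts across rounds, updated in $O(1)$ whenever a bought project resets some of them). For $\eq$, the affordability condition for $p$ with cost $c$ in round $r$ is $\sum_{v_i \in A(p)} \min(b_i, q \cdot c) = c$ for some $q \in [0,1]$; $p$ is selected in round $r$ iff it is affordable with a $q$-value no larger than that of the round-$r$ winner. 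One checks that the set of costs $c$ for which $p$ is affordable-with-small-enough-$q$ in a given round is an interval of the form $(-\infty, c_r]$, with $c_r$ expressible directly in terms of the round-$r$ winner's $q$-value and the current balances of $A(p)$; again, maintaining the relevant aggregate of the $A(p)$-balances (e.g.\ their sum, and a sorted structure if needed) lets one evaluate $c_r$ in $O(1)$ amortized time, since only the balances of voters in $A(p)$ that were charged in the previous round change.

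The main obstacle I anticipate is the $\eq$ case: unlike $\av$ and $\ph$, the quantity $q(p)$ is the solution of a piecewise-linear equation depending on \emph{which} of $p$'s supporters are ``capped'' at their balance, and this set of capped voters can itself shift as we vary the hypothetical cost $c$. So the claim ``the feasible costs form an interval with an $O(1)$-computable endpoint'' needs a genuine (if short) argument — essentially, that as $c$ decreases the required per-voter share $q(p)\cdot c$ can only decrease, hence the capped set only shrinks, hence $q(p)$ is monotone in $c$, so there is a well-defined threshold — together with the observation that the round-based $\eq$ algorithm already does the analogous computation for the real projects and can be asked to do it for $p$ with its current cost; the per-round extra work is then a constant number of arithmetic operations on already-maintained sums, justifying the $O(1)$ claim (here ``$O(1)$ per round'' should be read as treating the per-round aggregate updates, which touch only the previously-charged voters, as part of the rule's own round cost, exactly as in \Cref{tab:algorithms}). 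I would present $\av$ in one line, $\ph$ in a short paragraph, and devote the bulk of the proof to making the $\eq$ threshold argument precise.
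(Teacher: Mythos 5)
Your proposal is correct and follows essentially the same route as the paper's proof: run the rule once, compute for each round (plus the terminal round) the largest cost at which $p$ would be selected given the current state, and return the maximum, using $B-\cost(W)$ for $\av$, $\min(B-\cost(W),\ \text{total funds of } A(p))$ for $\ph$, and the threshold of the piecewise-linear affordability equation $\sum_{v_i\in A(p)}\min(b_i,q\cdot x)=x$ for $\eq$. You are in fact slightly more explicit than the paper on two points it leaves implicit --- the prefix-agreement argument that reducing $\cost(p)$ does not perturb the run before $p$ is selected, and the caveat that the $\eq$ threshold computation is only $O(1)$ if the maintenance of the supporters' balance aggregates is charged to the rule itself --- so there is no gap to report.
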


\subsection{Add-Approvals Measures}

Next we consider the number of additional approvals needed by a losing
project to be funded.  Our idea is to consider different attitudes
toward risk when asking random voters to add approvals for a given
project: An optimist would hope to be lucky and obtain new approvals
exactly from the right voters, a pessimist would prefer to ensure that
the project is funded irrespective of which voters add the approvals, and
a middle-ground position is to require some fixed probability of
success.  Formally, we express this idea as follows.

\begin{definition}
  Let $f$ be a PB rule, let $E = (P,V,B)$ be a PB instance,
  and let $p \in P \setminus f(E)$ be a losing project. Then:
  \begin{enumerate}
  \item $\optimistadd^f_E(p)$ is the smallest number $\ell$ such
    that it is possible to ensure that $p$ is funded by choosing
    $\ell$ voters and extending their approval sets with $p$.
  \item $\pessimistadd^f_E(p)$ is the smallest number $\ell$ such
    that for each subset of $\ell$ voters who do not approve $p$,
    extending their approval sets with $p$ ensures that $p$ is funded.
  \item $\randomadd^f_E(p)$ is the smallest number $\ell$ such that
    if $\ell$ voters selected uniformly at random (among those who
    originally do not approve~$p$) extend their approval sets with
    $p$, then $p$ is funded with probability at least $50\%$.
  \end{enumerate}
\end{definition}

The optimist measure  
was previously considered by
\citet{fal-sko-tal:c:bribery-multiwinner-success} in multiwinner
voting, whereas the $50\%$-threshold one is inspired by an analogous
notion used
by~\citet{boe-bre-fal-nie:c:counting-bribery,boe-bre-fal-nie:c:robustness-single-winner}
in the single-winner setting, and by
\citet{boe-fal-jan-kac:c:pb-robustness} in the robustness analysis of
PB outcomes.
For each rule~$f$, each PB instance~$E$, and each losing project~$p$
we have the following:
\begin{align*}
  \optimistadd_E(p) \leq \randomadd_E(p) \leq \pessimistadd_E(p).
\end{align*}
For $\greedyav$ all three measures are equal and immediate to compute,
so we mostly focus on the proportional rules. Computing the optimist
measure is easy as it suffices to consider each round independently
and find the ``richest'' voters whose approval for $p$ would lead to
funding~it.

\begin{restatable}{proposition}{proOptimistAddPoly}\label{prop:optimitstadd:poly}
  For $\ph$ and $\eq$, 
  $\optimistadd_E(p)$ can be computed alongside the rule, at an
  $O(n\log n)$ cost per round, where $n$ is the number of voters.  For
  $\av$ it can be computed alongside the rule at an $O(1)$ cost per
  round.
\end{restatable}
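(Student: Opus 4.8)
The plan is to treat all three rules with one template: run the round-based algorithm of the rule and, in each round, compute the smallest number of fresh approvals for $p$ that would get $p$ selected \emph{in that round}; output the minimum of these per-round values. This is sound because $p$ is losing, so adding approvals for $p$ changes nothing about the run up to the first instant $p$ itself is selected --- whether another project is scanned/eligible/affordable, and with which parameters, depends only on quantities (approval scores, virtual account balances) that stay untouched while $p$ is unchosen. Hence, to get $p$ funded, the rule's trajectory before that instant equals the trajectory of the \emph{original} run, and it suffices to ask, round by round, whether a small enough set of voters could have tipped $p$ in right then and to translate that set's size into a number of added approvals. For $\av$ this is immediate: $\greedyav$ scans projects by non-increasing approval score, adding $\ell$ approvals only moves $p$ forward, and for a fixed position the projects scanned before $p$ form a prefix of the order; greedy selection is monotone along prefixes (extending a prefix keeps the earlier decisions and can only raise the selected cost), so the budget left for $p$ is non-increasing as $p$ moves up, and $p$ fits exactly when it is moved at least as far as the last point of the scan at which $\cost(W)+\cost(p)\le B$. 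Maintaining $\cost(W)$ alongside the rule, in each round I would test this inequality and, when it holds, record the number of added approvals that places $p$ right after the project just scanned (read off the approval-score multiset of the other projects) --- an $\Oh{1}$ step per round after $\Oh{m\log m}$ preprocessing, with the standard care for internal ties. (For $\av$ the optimist, pessimist, and $50\%$ measures coincide, so this settles all three.)

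For $\phragmen$, fix a round and let $\tau$ be its endpoint, the instant its project is bought; all accounts grow at unit rate within a round, so the richest configuration of the round is at $\tau$, and by the template these are the original-run balances $b_i(\tau^-)$. For a set $T$ of voters who do not approve $p$, the modified run buys $p$ in this round iff the set $W$ bought so far has $\cost(W)+\cost(p)\le B$ (which, as $\cost(W)$ only grows over rounds, holds exactly on an initial segment of rounds) and $\sum_{v_i\in A(p)\cup T} b_i(\tau^-)\ge\cost(p)$. For a fixed round the best $T$ of a given size is the set of voters outside $A(p)$ with the largest balances, so I would sort those balances, form prefix sums, and read off the smallest admissible $\ell$; the sort dominates, giving $\Oh{n\log n}$ per round. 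The answer is the minimum over the rounds of the budget-feasible initial segment (beyond it, $\cost(W)+\cost(p)>B$ forever, so extra approvals cannot help). Both inequalities follow from the template: any successful $T$ of size $\ell$ buys $p$ at a time lying inside some original round, at whose endpoint the balances are at least as large, so the ``richest $\ell$'' choice works there too; hence the per-round minimum equals $\optimistadd_E(p)$.

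For $\equalshares$ the template is the same, but the perturbation claim now needs a genuine argument: adding approvals for the losing $p$ must leave the chosen projects --- and hence all balances --- unchanged up to the first round in which $p$ is picked. In each round $\eq$ picks the affordable project of smallest $q$-value, and for every project other than $p$ both affordability and $q$-value depend only on that project's supporters' balances, which are untouched while $p$ is unpicked; so, by induction on rounds, as long as $p$'s current $q$-value exceeds the $q$-value of the original round winner $w_k$, the rule again picks $w_k$ and the round unfolds exactly as originally. Granting this, in round $k$ I would set the threshold $c^{*}=q(w_k)\cdot\cost(p)$ (and $c^{*}=\cost(p)$ for the virtual round after the rule terminates, $\eq$ not being exhaustive) and use that, for a group $T$, $p$ is picked in round $k$ exactly when $\sum_{v_i\in A(p)\cup T}\min(b_i,c^{*})\ge\cost(p)$, together with the tie-break in the boundary case where $p$'s resulting $q$-value equals $q(w_k)$. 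Since $x\mapsto\min(x,c^{*})$ is monotone, the optimal $T$ of a given size is again the richest voters outside $A(p)$; sorting $\{\min(b_i,c^{*}):v_i\notin A(p)\}$, taking prefix sums, and scanning for the smallest admissible $\ell$ costs $\Oh{n\log n}$ per round. No separate budget test is needed here, since the voters hold exactly $B$ in total and an affordable project therefore always fits. As with $\phragmen$, the minimum over all rounds equals $\optimistadd_E(p)$.

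The hard part will be the perturbation principle for the proportional rules: formalising, by induction on rounds, that before $p$ is selected the modified run agrees verbatim with the original, and handling simultaneous-selection ties (when $p$ becomes eligible in the same round, or at the same instant, as another project) so that the per-round feasibility tests and their conversion to exact approval counts are correct. The remaining ingredients --- the monotonicity facts, the sorting-and-prefix-sum bookkeeping, and the per-round cost bounds --- are routine.
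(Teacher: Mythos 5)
Your proposal is correct and matches the paper's own proof in all essentials: both arguments run the rule round by round, argue that adding approvals for the losing $p$ leaves the execution unchanged until $p$ is bought, and in each round greedily pick the richest non-approvers (capped by $q\cdot\cost(p)$ for $\equalshares$, with an $\infty$/$\cost(p)$ threshold for the terminal round), taking the minimum over rounds, with the same $O(1)$ approval-score comparison for $\greedyav$. If anything, you are more explicit than the paper about the perturbation/tie-breaking issues and about including $A(p)$'s own endowment in the $\equalshares$ affordability sum, so no changes are needed.
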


\noindent
On the negative side, deciding if the pessimist measure has at least a
given value under 
$\phragmen$
is $\conp$-complete (the result
for $\equalshares$ remains elusive, although we also suspect
computational hardness).
Our proof looks at the complement of our problem, where we ask if it
is possible to add a certain number of approvals for $p$ without
getting it funded, and we give a reduction from 
a variant of 
\textsc{Set Cover} where we ask for an exact cover (i.e., a family of
pairwise disjoint sets that union up to a given universe).
The idea is to set up a (somewhat intricate) PB instance where
$\phragmen$ interleaves between selecting candidates from
the to-be-covered universe
and certain dummy candidates, whose selection resets voters' budgets
to a fixed state. If we add approvals for a designated candidate~$p$
to voters that form an exact cover, then this process goes as planned, but if
we add approvals to two voters whose corresponding sets overlap, then
$p$ gets funded. 

\begin{restatable}{theorem}{thmphragmenpessimist}\label{thm:phragmen-pessimist}
  For $\ph$, the problem of deciding if $\pessimistadd_E(p)$ is at
  least a given value
  $\ell$ 
  is $\conp$-complete, even if all projects have unit cost.
\end{restatable}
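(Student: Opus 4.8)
\medskip
\noindent\emph{Proof plan.}
Membership in $\conp$ is the easy direction. As remarked above, the complement of the problem asks whether one can add a prescribed number of approvals for $p$ --- to voters who do not yet approve it --- while still keeping $p$ out of the outcome; this is in $\np$, since a nondeterministic algorithm guesses the set of voters that get the new approval and then runs the polynomial-time round-based algorithm for $\phragmen$ on the modified instance to verify that $p$ is not funded. Hence the problem is in $\conp$, and essentially all the work lies in the $\conp$-hardness, which I would establish by a reduction from \rxthreec{}: exact cover by $3$-element sets in which every element occurs in exactly three sets; the ``exact'' flavour of this \setCover{} variant is precisely what the gadget exploits.

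Given an \rxthreec{} instance with universe $U=\{u_1,\dots,u_{3q}\}$ and family $\calS$ of $3$-element subsets of $U$, I would construct a unit-cost PB instance containing: a ``universe project'' $\hat u_i$ for each $u_i\in U$; a ``set voter'' $w_j$ for each $S_j\in\calS$, whose approval set is the three universe projects corresponding to the elements of $S_j$ (and, initially, not $p$); the distinguished losing project $p$; and a family of ``dummy projects'', each approved by essentially all voters, whose number, unit costs, and positions in the fixed tie-breaking order are tuned so that on the unmodified instance $\phragmen$ alternates --- universe project, dummy project, universe project, dummy project, \dots{} --- with each dummy purchase resetting every voter's virtual account to the same (zero) state. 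The threshold $\ell$ used in the reduction is (essentially) the cover size $q$; the reason for working with \rxthreec{} is that, since $|U|=3q$ and every set has size three, any $q$ pairwise-disjoint members of $\calS$ automatically cover $U$, so a $q$-tuple of set voters fails to ``overlap'' exactly when the corresponding sets form an exact cover.

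The core of the proof is a dichotomy for what $\phragmen$ does after $p$ is added to the ballots of $q$ set voters $w_{j_1},\dots,w_{j_q}$. If $S_{j_1},\dots,S_{j_q}$ are pairwise disjoint --- i.e., an exact cover --- then the process ``goes as planned'': every universe project is still bought ``on schedule'' by its three supporters before any of them holds a surplus relevant to $p$, and each periodic dummy reset erases whatever little money drifted toward $p$, so $p$ never becomes affordable and loses. If instead two of the chosen sets share an element $u_i$, then both corresponding voters feed money into $\hat u_i$; buying $\hat u_i$ consumes only part of it, one voter is left with a surplus, and this surplus --- together with what $p$'s other augmented voters hold --- makes $p$ affordable before the schedule can recover, so $\phragmen$ selects $p$. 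Thus the modified instance keeps $p$ unfunded exactly when the chosen $q$ sets form an exact cover, which yields the reduction once one also checks the boundary cases (adding $p$ to fewer voters, or to a $q$-tuple that is not pairwise disjoint), all handled by the same schedule analysis. The main obstacle, and the reason the instance is ``somewhat intricate'', is engineering the dummy projects and the tie-breaking order so that $\phragmen$'s continuous-time schedule genuinely interleaves as intended despite the unit-cost restriction, and calibrating all the numbers so that a \emph{single} overlap produces just enough surplus to tip $p$ over its cost while the disjoint (exact-cover) case never does before a reset wipes it out.
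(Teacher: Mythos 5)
Your membership argument and your choice of \rxthreec{} as the source problem both match the paper, and the high-level architecture (interleaved universe/dummy purchases with periodic resets, overlap of two chosen sets triggering the funding of $p$) is also the right one. However, the core overlap-detection gadget as you describe it would not work under \phragmen{}. You have set voter $w_j$ approve the three universe projects \emph{in} $S_j$, and you detect an overlap at $u_i$ via the claim that ``buying $\hat u_i$ consumes only part of it, one voter is left with a surplus.'' Under the paper's (standard) definition of \phragmen{}, the moment the approvers of a project jointly hold its cost, the project is bought and \emph{all} of their bank accounts are reset to zero; no voter retains a surplus from overpayment. (Partial payment with retained change is a feature of \equalshares{}, not of \phragmen{}.) So in your construction, two augmented voters whose sets share $u_i$ would both pay for $\hat u_i$, both be zeroed, and end up with \emph{less} money available for $p$ --- the detection direction is inverted and the reduction breaks.

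The paper resolves exactly this by making each set voter approve the \emph{complement} $U\setminus S_j$ (plus the dummies). Then a set voter whose set contains $u_i$ does \emph{not} approve $u_i$, does not pay when $u_i$ is bought, and therefore keeps the money it accumulated during that iteration. If two such voters have received an added approval for $p$, their jointly retained funds, together with the carefully calibrated base support for $p$ (groups $Z$ and $A$ in the paper), push $p$ past its cost before the dummy $d_i$ can be bought; if at most one such voter per element is chosen --- which, for $\ell=\nicefrac{n}{3}$ size-$3$ sets over $n$ elements, forces an exact cover --- the dummy reset always wins. A secondary gap in your plan is that you wave at ``boundary cases'' for approvals added to voters other than set voters; the paper needs dedicated voter groups and a separate argument showing that adding a $p$-approval to any non-set voter immediately gets $p$ funded (or is dominated by adding it to a set voter), and this is a genuine part of the correctness proof rather than a routine check.
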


Fortunately, for $\phragmen$ and $\equalshares$ we can compute the
pessimist measure using an $\fpt$ algorithm parameterized by the
number of originally funded projects.  The key insight is to group 
voters by their bank account balances at the start of each
round. Unlike many $\fpt$ algorithms, this one is indeed
practical (since in practice the number of groups is small) and we use it in our experiments (using Gurobi).

\begin{restatable}{theorem}{thmFPT}\label{thm:fpt}   
  For $\ph$ and $\eq$, there is an algorithm that computes
  $\pessimistadd_E(p)$ and runs in $\fpt$ time with respect to
  parameter $|f(E)|$, i.e., the number of rounds.
\end{restatable}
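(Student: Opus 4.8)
The plan is to design a round-based algorithm that, while executing $f(E)$ (for $f \in \{\ph, \eq\}$), tracks enough state to determine, for the fixed losing project $p$, the smallest $\ell$ such that every way of adding $p$ to $\ell$ non-approvers' ballots gets $p$ funded. Equivalently, I want the largest $\ell'$ such that \emph{some} set of $\ell'$ new approvers fails to fund $p$, and then output $\ell' + 1$. The central structural observation, already flagged in the excerpt, is that within a single round of either rule the behavior of a voter depends only on their current bank-account balance, and across the whole run there are at most $|f(E)| + 1$ ``balance classes'' of voters that ever matter: each voter's balance only changes when a project they approve is bought, so after $k$ rounds the voters fall into at most $k+1$ groups according to how their balance has evolved (for $\ph$, the relevant statistic is the last time the voter's account was reset; for $\eq$, it is the multiset of charges incurred). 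Let $r = |f(E)|$.

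The first step is to set up, at the start of each round, a partition of $V$ into the balance classes $G_1, \dots, G_s$ with $s \le r+1$, recording $|G_j|$ and the common balance (or the relevant continuous-time parameter, for $\ph$). The second step is to express "adding $\ell$ approvals for $p$ fails to fund $p$" as a statement about how many new approvers come from each class: I claim that whether $p$ ever becomes affordable/bought, \emph{and in which round}, is determined by the vector $(n_1, \dots, n_s)$ counting how many of the $\ell$ added approvals land in each class $G_j$ (plus the original approvers, who themselves are distributed among the classes). This is because $p$'s affordability in round $t$ under $\eq$ is governed by $\sum_{v \in A(p)} \min(b_v, q\cdot\cost(p))$, which only sees balances; and the \emph{competing} projects are unaffected by adding approvals to $p$ only — so the entire run, as a function of the added-approval profile, depends only on these counts. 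Thus the adversary's choice reduces to: pick nonnegative integers $n_j \le |G_j \setminus A(p)|$ with $\sum_j n_j = \ell$, and ask whether the resulting (deterministically simulated) run funds $p$. Step three: to compute $\pessimistadd$, I iterate over $\ell = 0, 1, 2, \dots$ and, for each $\ell$, decide whether \emph{all} feasible count-vectors fund $p$ — equivalently, whether there exists a count-vector that does \emph{not}. The number of count-vectors is $O(\ell^{s})$, hence $O(n^{r+1})$, which is already XP; to get genuine $\fpt$ I phrase "does there exist a failing profile of size $\ge \ell'$" as an integer program with $s = O(r)$ variables and run Lenstra's algorithm (or, in practice, Gurobi, as the excerpt notes), whose running time is $\fpt$ in the number of variables.

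The most delicate step — and the one I expect to be the main obstacle — is verifying that the run really is a function of the balance-class count-vector \emph{throughout}, including tie-breaking and including the way the classes themselves evolve once $p$ is bought. The subtlety is that after $p$ is selected, the voters who bought it have their balances altered, which could in principle split or merge classes in a way that depends on \emph{which} voters (not just how many per class) were added; but since all the added voters in a given class started with identical balances and pay the identical share when $p$ is bought, they remain in a common class afterwards, so the invariant is preserved. One must also handle the prespecified tie-breaking order carefully: two projects tying on the $\eq$ affordability criterion $q(\cdot)$, or two projects becoming $\ph$-affordable at the same instant, are resolved identically regardless of \emph{which} voters were added, since the tie-break order over projects is fixed in advance. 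The remaining work is bookkeeping: showing the per-round update of the class structure costs $\mathrm{poly}(|E|)$, bounding $s \le r+1$ rigorously by induction on rounds, and checking the monotonicity needed so that binary-searching (or incrementing) over $\ell$ is sound — namely that if some size-$\ell$ addition fails to fund $p$ then so does some size-$(\ell-1)$ addition (simply drop one added approver; dropping an approval can only hurt $p$, so it still fails). Assembling these pieces yields an algorithm running in time $g(r) \cdot \mathrm{poly}(|E|)$, as claimed.
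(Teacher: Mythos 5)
Your overall strategy is the same as the paper's: reduce to finding the \emph{largest} set of added approvers that fails to fund $p$ (then add one), observe that in any failing run the rest of the execution is unchanged so the constraint is just a per-round budget condition on the added voters' balances, group voters into types, and solve an integer program with $\fpt$-many variables via Lenstra. However, there is a genuine error in your key structural claim: the bound of $|f(E)|+1$ balance classes, and the statistics you propose to index them, are wrong. For the ILP to be sound, a ``type'' must consist of voters with the \emph{identical balance vector across all $k=|f(E)|$ rounds}, because the round-$i$ constraint $\sum_j n_j\, b^{(j)}_i \le m_i$ must hold simultaneously for every $i$ with a single global choice of how many added approvers come from each type. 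Your proposed statistics do not determine this vector: under $\ph$, ``the last time the voter's account was reset'' is a per-round quantity (two voters can agree on it before round $i$ yet disagree before round $i'$), and under $\eq$ the ``multiset of charges incurred'' forgets \emph{when} the charges happened, so two voters with the same multiset can have different balances in intermediate rounds. The correct classification is by the subset of rounds in which the voter pays (equivalently, the full balance vector), which yields up to $2^k$ types, not $k+1$. If you instead partition per round into $\le k+1$ classes, the adversary's feasible region is not a product over rounds and the count-vector formulation collapses. The claim is repairable --- $2^k$ types still gives an ILP with $\fpt$-many variables and hence an $\fpt$ algorithm, which is exactly what the paper does --- but as written your induction ``$s\le r+1$'' would fail, and it is the load-bearing step.

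Two smaller remarks. First, your worry about how classes evolve \emph{after} $p$ is bought is moot: you only need to characterize the failing additions, and in a failing run $p$ is never selected, so all balances are exactly those of the original (unmodified) execution; this is also why your monotonicity argument, while correct, is unnecessary --- one can simply maximize $\sum_j x_{T_j}$ directly in the ILP rather than iterating or binary-searching over $\ell$. Second, for $\eq$ you need one additional constraint beyond the $k$ round-constraints, namely that $p$ does not become affordable as a $(k{+}1)$-st project after the original rule would have terminated; the paper handles this with an extra constraint on the final balances, and your write-up omits it.
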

\begin{proof}[Proof sketch (\phragmen{})]
  Let $E = (P,V,B)$ be a PB instance with losing project $p$ (for ease
  of exposition, we assume $p$ to be last in the tie-breaking
  order). We will show how to compute the largest number of approvals
  whose addition \emph{does not} lead to funding $p$; 
  $\pessimistadd_E(p)$ is one larger.  Let $k = |\phragmen(E)|$ be the
  number of funded projects or, equivalently, the number of rounds
  performed by the rule. We assume that $p$ is approved by at least
  one voter in $V$.

  For each round $i$, let $m_i$ be the difference between $\cost(p)$
  and the total funds that approvers of $p$ have in round $i$.  Our goal is to find a largest group of voters who
  do not approve $p$ and whose total funds in each round
  $i$ are at most $m_i$.
  
  For each voter $v_i \in V \setminus A(p)$, we define its
  \emph{balance vector} $(b^i_1,\ldots, b^i_k)$, which contains the
  balance of the voter's bank account right before each round. We
  partition the voters not approving $p$ into \emph{voter-types}
  $T=\{T_1,\ldots,T_t\}$, where each type consists of voters with
  identical balance vectors.

  To solve the problem, we form an integer linear program (ILP).
  For every voter-type $T_i$ we have a nonnegative integer
  variable $x_{T_i}$ that represents the number of voters of this type
  that will additionally approve~$p$. For each round $i$ we form the
  following \emph{round-constraint}: 
  \[
    \textstyle \sum_{j\in\{1, \ldots, t\}} x_{T_j}\cdot b^{T_j}_i \leq m_i.
  \]
  The objective function is to 
  maximize the sum of all the $x_{T_i}$ variables. Our algorithm outputs
  the value of this sum plus $1$.

  Finally, we note that there are at most $O(2^k)$ voter types
  (indeed, each voter type corresponds to a $k$-dimensional $0$/$1$
  vector, which has $1$ in position $i$ if a voter approves---and,
  hence, pays for---the candidate selected in round $i$).  Thus,
  the number of variables in our ILP is $O(2^k)$
  ; we can solve it using the classic algorithm of
  \citet{len:j:integer-fixed} in $\fpt$ time with respect to $k$.
  The algorithm can be
  tweaked to also work for \equalshares{}.
\end{proof}

For the $50\%$-threshold measure, we resort to sampling. That is,
given rule $f$, a PB instance $E$ and a losing project~$p$, we iterate
over numbers $\ell$ of approvals to add and for each of them we repeat
the following experiment $t$ times (where $t$ is a parameter): We add
approvals for $p$ to $\ell$ voters chosen uniformly at random (among
those not approving $p$) and we compute $f$ on the thus-modified
instance. We terminate for the smallest value $\ell$ where $p$ was
funded at least $t/2$ times.
We use sampling because the construction from  Theorem~\ref{thm:phragmen-pessimist} also shows that for
$\phragmen$, the problem of evaluating the probability that a given
project is funded after randomly adding a given number of approvals is
$\sharpp$-complete.

\subsection{Add-Singletons Measure}\label{add:single}
Instead of asking existing voters to approve some project~$p$, one
can also recruit additional voters, who would only approve $p$. This
gives rise to the following measure.

\begin{definition}
  Let $f$ be a PB rule, let $E = (P,V,B)$ be a PB instance,
  and let $p \in P \setminus f(E)$ be a losing project. Then
  $\smash{\singletonadd^f_E}(p)$ is the smallest number $\ell$ such that
  if we extend $V$ with $\ell$ voters who only approve $p$, then $f$
  selects~$p$.
\end{definition}
For $\av$, this is equal to the measures from the previous section.
For $\phragmen$, $\singletonadd_E(p)$ is upper-bounded by
$\optimistadd_E(p)$: In each round the newly added voters always have
at least as much money as the original ones (until $p$ is
selected). Under $\equalshares$ adding new voters changes the initial
balances of the voters, which can change the overall execution of the
rule and, hence, there is no clear relation between the two
measures. In fact, under $\equalshares$ it is even possible that a
project is funded after adding $\ell$ voters, but may fail to be
funded after adding $\ell+1$ of them~\citep[Proposition
A.3]{lac-sko:b:approval-survey}; see also the experiments in
\Cref{be-co}.

Computing the value of $\singletonadd_E(p)$ is easy for $\phragmen$
because we can compute how much additional funds each new voter
would bring in each round.  For $\equalshares$, due to its
nonmonotonic behavior, we resort to a brute-force approach: We keep
adding voters one by one and recompute the rule each time (in our
experiments, we were forced to add larger groups to speed up
computation).

\begin{restatable}{proposition}{propRandomAddPoly}\label{prop:randomAdd:poly}
  For $\av$ and $\ph$, $\singletonadd_E(p)$ can be computed
  alongside the rule, at an $O(1)$ cost per round.
\end{restatable}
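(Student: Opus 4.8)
The plan is to handle the two rules separately, exploiting in both cases the principle that \emph{a fresh voter who approves only $p$ never affects any part of the run that does not concern $p$}: such a voter changes neither the approval scores of the other projects (under \av{}) nor the rate at which money flows into the other projects (under \ph{}), up to the moment at which $p$ itself is selected.

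For \av{}, adding $\ell$ such voters raises $p$'s approval score by exactly $\ell$ and alters nothing else that \greedyav{} inspects (the rule ignores $|V|$); this is the same effect as giving $p$ to $\ell$ previously non-approving voters, so $\singletonadd^{\av}_E(p)=\optimistadd^{\av}_E(p)$ and the bound is immediate from \Cref{prop:optimitstadd:poly}. Directly: run \greedyav{} once, tracking the cumulative cost $c$ of the projects selected so far; in a round that considers some project $r$ with $c+\cost(p)\le B$, pushing $p$ in just before $r$ gets it funded, and this needs at most $\max\!\big(0,\,|A(r)|-|A(p)|+1\big)$ extra approvals (the ``$+1$'' is a tie-break correction, dropped when $p$ precedes $r$ in the tie-breaking order; an earlier insertion would only help, so overtaking $r$ is enough). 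Since $c$ is non-decreasing and the considered scores are non-increasing along the run, the best choice is the last such round, so the whole computation is $\Oh{1}$ per round.

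For \ph{}, couple the run that has $\ell$ added singleton supporters of $p$ with the original run: as long as $p$ is unselected the two runs make exactly the same purchases at exactly the same times, the sole difference being that the $\ell$ new accounts fill up (at rate~$1$) and are never emptied. Thus, if $g(\tau)$ denotes the joint balance of $p$'s original supporters at time $\tau$ of the original run, then in the modified run the supporters of $p$ hold $g(\tau)+\ell\tau$ at time $\tau$, and $p$ is bought at the first time this reaches $\cost(p)$ while the budget still admits~$p$. Since the cost of the bought set is non-decreasing, the budget-admitting times form an initial segment of the timeline, which (as \phragmen{} stops only once $\cost(W)+\cost(p)>B$) ends no later than the last round; and since $g(\tau)+\ell\tau$ is strictly increasing (slope $|A(p)|+\ell\ge1$) between consecutive purchases, it reaches $\cost(p)$ inside the $j$-th inter-purchase interval exactly when its value $g_j:=g(\tau_j^-)$ just before the $j$-th purchase (at time $\tau_j$) satisfies $g_j+\ell\tau_j\ge\cost(p)$. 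Hence $\singletonadd^{\ph}_E(p)=\min_j\max\!\big(0,\lceil(\cost(p)-g_j)/\tau_j\rceil\big)$, the minimum over those rounds $j$ whose pre-round bought set still leaves room for $p$. This can be computed alongside \phragmen{}: maintain $g$ (it is $0$ before round~$1$, rises by $|A(p)|\cdot(\tau_j-\tau_{j-1})$ as the clock advances into round $j$, and drops by the total balance of those supporters of $p$ that lie in the set $A(q_j)$ whose accounts the rule resets after the $j$-th purchase---this drop is accumulated during the reset step the round-based algorithm performs anyway); then each admissible round contributes one candidate $\lceil(\cost(p)-g)/\tau_j\rceil$ to a running minimum, at $\Oh{1}$ extra cost per round.

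The conceptual core---and the main obstacle---is the coupling argument for \ph{}: one must rule out that the extra voters help $p$ \emph{indirectly}, e.g.\ by freeing budget or permuting the order in which other projects are bought, which is exactly where ``approves only $p$'' is used. The remaining delicate point is the edge behaviour at the end of the admissible segment and when $p$ becomes affordable simultaneously with some other project; there one consults the fixed tie-breaking order over projects to decide whether $g_j+\ell\tau_j=\cost(p)$ already suffices, and sets up the per-round bookkeeping (and the analogous tie-break correction for \av{}) accordingly.
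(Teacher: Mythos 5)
Your proposal is correct and follows essentially the same route as the paper's proof: for \av{} it identifies the last round at which the remaining budget still admits $p$ and charges the score deficit to the project considered there (with the same tie-break correction), and for \ph{} it uses the observation that singleton supporters never perturb the run until $p$ is bought, so each round $j$ contributes the candidate value $\lceil(\cost(p)-e^p_j)/\tau_j\rceil$ and one takes the minimum over budget-admissible rounds. Your explicit coupling argument and closed-form minimum merely spell out what the paper phrases as ``the budget of a virtual voter approving only $p$ in this round,'' so the two arguments coincide.
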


\subsection{Rivalry-Reduction Measure}\label{sec:rivalry-red}
Under proportional rules, a project may lose because its supporters
also approve other projects, on which they spend their virtual money.
Thus, another strategy that a project proposer could employ to increase
a project's chances of success is to try to convince its supporters to
not approve other projects.

\begin{definition}
  Let $f$ be a PB rule, let $E = (P,V,B)$ be a PB instance, and let
  $p \in P \setminus f(E)$ be a losing project. Then $\smash{\rivalred^f_E}(p)$ is
  the smallest number $\ell$ such that if we select $\ell$ voters
  uniformly at random (among those who approve~$p$)
  and change them to only approve $p$, then $p$ is funded with probability at least $50\%$.
\end{definition}
\noindent
This measure is not always defined: A project with too
few voters will not win even if they do not support
any other projects.
For the sake of focus, we do not study the optimist and pessimist variants,
except to note that even the optimist variant would be
$\np$-complete to compute (by adapting proofs
on bribery and control in single-winner approval
voting~\citep{fal-hem-hem:j:bribery,hem-hem-rot:j:destructive-control}).
\begin{restatable}{proposition}{rivalredproof}
   For $\av$, $\ph$, and $\eq$, the problem of
  deciding, given a PB instance $E$, a losing project $p$, and an
  integer $\ell$, if it is possible to ensure $p$'s victory by
  changing at most $\ell$ votes that originally approve $p$ to only
  approve $p$ is $\np$-complete, even when $B = 1$ and every project costs 1.
\end{restatable}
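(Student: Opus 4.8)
The plan is to reduce from \textsc{Vertex Cover} (one could equally well use \xthreec{}, the usual source for control-type hardness in single-winner approval voting). Membership in $\np$ is immediate: given a set $S$ of at most $\ell$ voters from $A(p)$, we rewrite each of their ballots to $\{p\}$ and run the (polynomial-time) rule to check whether $p$ is selected. The key structural observation is that when $B=1$ and every project costs $1$, both $\greedyav$ and $\phragmen$ behave exactly like single-winner plurality: precisely one project is funded, namely the one with the largest approval score, ties broken by the prespecified order. For $\greedyav$ this is immediate, and for $\phragmen$ all voters accrue money at the same rate, so the first project whose supporters jointly hold one unit of funds is the one with the most approvals, after which the budget is exhausted and the rule stops. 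Moreover, the rivalry action never changes $A(p)$ itself; it only lowers the approval scores of the \emph{other} projects, each by the number of affected $p$-supporters who approved it. Hence, for these two rules the question becomes: can we, by ``cleaning up'' at most $\ell$ ballots in $A(p)$, bring every rival's score down to at most the (unchanged) score of $p$?

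Given a graph $G=(N,F)$ and an integer $k$---and assuming, as we may, that $|N|\ge 2$ and $F\neq\emptyset$---I would build the following PB instance. There is a voter $v_u$ for each $u\in N$, and these $|N|$ voters form $A(p)$; additionally, for every edge $e\in F$ there are $|N|-1$ filler voters who approve only the project $c_e$ described next. The projects are $p$ together with one project $c_e$ per edge $e=\{u,w\}$, where $c_e$ is approved by $v_u$, $v_w$, and its $|N|-1$ fillers, so that $|A(c_e)|=|N|+1=|A(p)|+1$. We place $p$ first in the tie-breaking order, set $B=1$ and all costs to $1$, and put $\ell=k$. Initially every $c_e$ outscores $p$, so $p$ loses. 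Switching a set $S\subseteq A(p)$ of ballots to $\{p\}$ leaves $|A(p)|=|N|$ unchanged and lowers the score of $c_e=\{u,w\}$ by one for each endpoint whose voter lies in $S$; consequently $p$ becomes the (tie-break-favored) plurality winner if and only if, for every edge $e$, at least one endpoint's voter is in $S$, i.e., if and only if $\{u : v_u\in S\}$ is a vertex cover of $G$. Thus $p$ can be made to win with at most $k$ changes iff $G$ has a vertex cover of size at most $k$, which gives $\np$-hardness for $\greedyav$ and $\phragmen$.

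The main obstacle is $\equalshares$: with $B=1$ and unit costs the rule is degenerate---a cost-$1$ project is affordable only when its supporters jointly own the whole budget, i.e., only when it is approved unanimously---so the construction above does not transfer verbatim. I would deal with this by either (i) running $\equalshares$ together with its standard completion phase, since on $B=1$ unit-cost instances $\eqphragmen$ again coincides with plurality and the same construction applies, or (ii) enlarging the budget (say $B=2$) and prepending one ``setup'' round, funded by a dedicated block of voters, chosen so that the approval scores and the relative bank-account balances of $p$ and the $c_e$ are left intact, so that the contest in the following round is again governed by the vertex-cover condition. Option (ii) is the delicate one, because $\equalshares$ is sensitive to the exact balances: one has to verify that whether or not we empty the accounts of a rival's supporters---rather than merely changing head counts---maps cleanly onto covering edges, and that no project besides $p$ and the $c_e$ disturbs the intended two-round trajectory.
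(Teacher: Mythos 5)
For $\av$ and $\ph$ your argument is correct and is essentially the paper's: the paper reduces from \setCover{} with one ``set-voter'' per set (these are the approvers of $p$), one rival project per universe element sitting exactly one approval above $p$, and dummy voters padding the rivals' scores; your Vertex Cover construction is the same gadget with edges playing the role of elements and vertices the role of sets. In both cases the crux is the observation that with $B=1$ and unit costs, $\greedyav$ and $\phragmen$ degenerate to plurality with tie-breaking favouring $p$, so the bribery succeeds iff the chosen $p$-supporters hit every rival at least once. Nothing to object to there.

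The gap is the $\eq$ case, which you explicitly leave unresolved. Your diagnosis of the obstacle is in fact sharper than the paper's own treatment: the paper simply asserts that ``there is only one round and all the rules select the most approved candidate as the single winner,'' but as you note, under $\equalshares$ with $B=1$ and unit costs a project is affordable only if $\sum_{v_i\in A(c)}\min(b_i,q)=1$ with every $b_i=\nicefrac{1}{|V|}$, which forces $|A(c)|=|V|$; a non-unanimous project is never funded, so in the paper's construction no round occurs at all and $p$ loses in both the yes- and no-instances. Unfortunately, neither of your repairs proves the proposition as stated: option (i) establishes hardness for $\eqphragmen$, which is a different rule from the $\eq$ the proposition names, and option (ii) abandons the ``even when $B=1$'' clause that is part of the claim. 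Worse, your own observation shows that for $\eq$ in the stated regime the problem is polynomial-time decidable ($p$ can be funded only if it is approved unanimously, in which case a single ballot change suffices to eliminate every unanimous rival), so no reduction can close this gap without either changing the rule, the budget regime, or the statement. You should flag this explicitly rather than offering the two half-fixes as if one of them will go through.
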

\noindent To compute $\rivalred$, we use an analogous sampling
approach as in the case of $\randomadd$.

One may also wonder why we consider \emph{all} the voters who approve
$p$ and not only those who approve $p$ \emph{and} some further
project(s). We chose this approach to capture a campaign that reaches
the supporters of the project randomly.

\section{The Measures in Practice}\label{sec:pract}
Next, we analyze the behavior and correlation of our measures on
real-world PB instances from Pabulib \cite{fal-fli-pet-pie-sko-sto-szu-tal:c:pabulib},
focusing on $\eqphragmen$ here. 
Afterwards, we present a detailed study of the PB election from
Wieliczka. 
We include further details, as well as
results for $\phragmen$ and $\greedyav$, in \Cref{app:pract}.

\newcommand{\corr}[1]{\tikz{\pgfmathsetmacro{\myperc}{#1*#1*33}\node [transform shape, rounded corners=1pt,fill=blue!\myperc,inner sep=0, minimum width=22pt, minimum height=8pt] {#1}; }}

\begin{table}[t]
	\centering
	\setlength{\tabcolsep}{3pt}
	\begin{tabular}{lcccccc}
			\toprule
			{} &  optimist &  pessimist &  50\% &  singleton &  rival &  cost  \\
			\midrule
			optimist     &                   $-$ &                   \corr{0.87} &          \corr{0.98} &             \corr{0.98} &                       \corr{0.88} &         \corr{0.76}  \\
			pessimist     &                   \corr{0.87} &                   $-$ &          \corr{0.94} &             \corr{0.84} &                       \corr{0.50} &         \corr{0.73}  \\
			50\%             &                   \corr{0.98} &                   \corr{0.94} &          $-$ &             \corr{0.95} &                       \corr{0.78} &         \corr{0.79}  \\
			singleton          &                   \corr{0.98} &                   \corr{0.84} &          \corr{0.95} &             $-$&                       \corr{0.93} &         \corr{0.74}  \\
			rival &                   \corr{0.88} &                   \corr{0.50} &          \corr{0.78} &             \corr{0.93} &                       $-$&         \corr{0.63}  \\
			cost               &                   \corr{0.76} &                   \corr{0.73} &          \corr{0.79} &             \corr{0.74} &                       \corr{0.63} &         $-$  \\
			\bottomrule
	\end{tabular}
	\caption{Pearson Correlation Coefficients between measures for  $\eqphragmen$ (values near $1$ mean
		strong correlation).} 
	\label{tab:PCC}
\end{table}

\paragraph{Data.}
We conduct our experiments on all $551$ PB instances with approval votes from Pabulib \cite{fal-fli-pet-pie-sko-sto-szu-tal:c:pabulib} for which both $\phragmen$ and $\eqphragmen$  terminate within one second (on 1 thread of an Intel(R) Xeon(R) Gold 6338 CPU @ 2.00GHz core). 
In total, there are $3639$ losing projects for $\eqphragmen$ and $3513$ for $\phragmen$.

\paragraph{Measures.}
To compute the measures, we use the algorithms described in
\Cref{sec:measures}. For the brute-force and sampling algorithms, we
increase the approval score of the designated project by $1\%$ in each
step (repeating each step $100$ times for the sampling algorithms).
To simplify comparisons between measures, we normalize them to lie
between $0$ (being far away from victory) and $1$ (being close to
victory).  Specifically, for our four measures modifying a project's
approval score, we divide its original approval score by its approval
score plus the
measure. 
For example, if a project with score $20$ requires $80$ additional approvals (according to the measure), then the normalized value is $0.2$, since the project received $20\%$ of the needed approvals.
For $\costred$, we
divide $\costred(p)$ by $\cost(p)$, and for $\rivalred$, we report the
fraction of supporters who can continue to approve other projects.

\paragraph{Running Times.} We ran our experiments on 10 threads of an Intel(R) Xeon(R) Gold 6338 CPU @ 2.00GHz core. 
The algorithms for $\mathrm{optimist\hbox{-}add}$, $\mathrm{pessimist\hbox{-}add}$, $\mathrm{singleton\hbox{-}add}$, and $\mathrm{cost\hbox{-}red}$ are all very fast, finishing in below $20$ seconds on $95\%$ of instances. The sampling-based algorithms for $\mathrm{rival\hbox{-}red}$ and $\mathrm{50\%\hbox{-}add}$ are naturally slower, but still finish in $88\%$ of cases in below $10$ minutes. In sum, while our implementations are certainly not fully optimized, running the different algorithms for a PB exercise in one's city is feasible, even for tens of thousands of voters and hundreds of projects.

\subsection{Behavior and Correlation}
\label{be-co}
In \Cref{tab:PCC}, we show the linear correlation between our measures
for $\eqphragmen$, as given by the Pearson Correlation Coefficient
(PCC)
and in \Cref{app:pract-EQ}, we
show correlation plots.  In general, we want to stress that we observe
many projects close to getting funded under the different measures
(such projects can also be found in some of the plots presented in
this section).

\begin{figure}[t!]
    \centering 
\begin{subfigure}{0.3\textwidth}
\includegraphics[width=\textwidth]{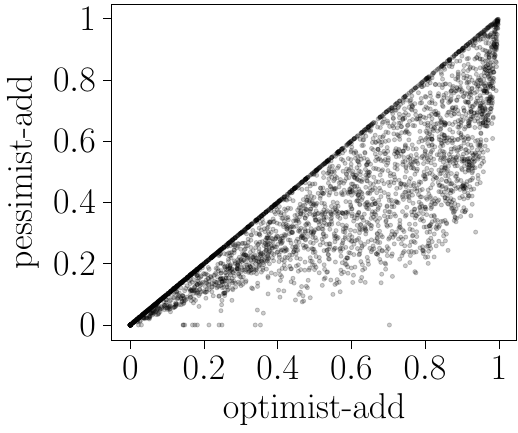}
\end{subfigure} \qquad
\begin{subfigure}{0.3\textwidth}
\includegraphics[width=\textwidth]{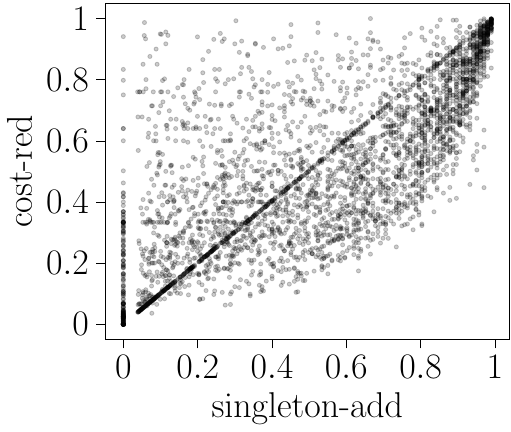}
\end{subfigure}
\caption{Correlation plots where each point is one project. Measures are normalized so that 1 denotes no change and 0 denotes a maximal-size change.} 
\label{fig:corr}
\end{figure} 

\paragraph{Add-Approvals and Singletons Measure.}
We first analyze the four measures related to adding approvals  to existing or new ballots. 
As seen in \Cref{tab:PCC}, they all have a strong pairwise correlation. 
However, there are small differences motivating a partitioning of the measures into two groups: 
$\optimistadd$, $\randomadd$, and $\singletonadd$ all have a pairwise correlation of at least $0.95$,  whereas $\pessimistadd$  has a lower correlation with the other three measures.
In the first group, the correlation between $\optimistadd$ and $\singletonadd$ is particularly strong.
For more than $90\%$ of the projects, the difference between the two measures is less than $0.051$ (exceptions include projects where one of the two measures is zero). 

Comparing the optimistic and pessimistic views on adding approvals, while they have a strong correlation of $0.87$, on the level of single projects, they can produce quite different results (see \Cref{fig:corr} left). In fact, for around $10\%$ of projects, almost twice as many approvals are needed to get the project funded under the pessimistic approach than under the optimistic one. Thus, under $\eqphragmen$, it really matters which voters add approvals for a project.

The $\randomadd$ measure lies between $\optimistadd$ and $\pessimistadd$ and is strongly correlated with them. It tends to be slightly closer to the optimistic view (average difference $0.063$) than to the pessimistic one ($0.077$).

\begin{figure}[t!]
    \centering 
\begin{subfigure}{0.33\textwidth}
\includegraphics[width=\textwidth]{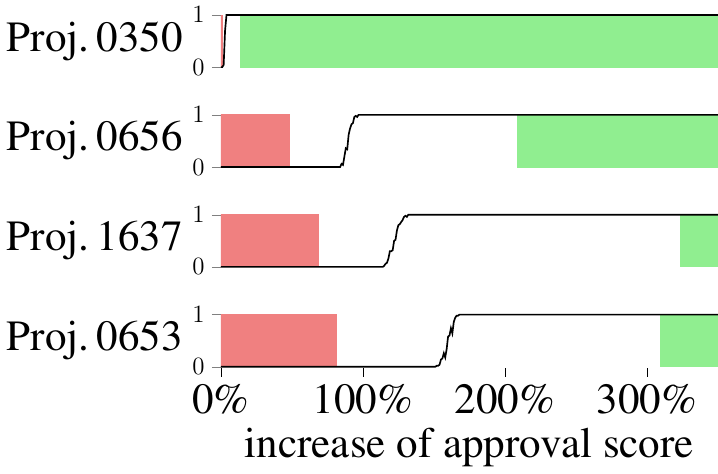}
  \caption{Wierzbno Wygledow 2019}\label{fig:50:a}
\end{subfigure}\qquad
\begin{subfigure}{0.33\textwidth}
\includegraphics[width=\textwidth]{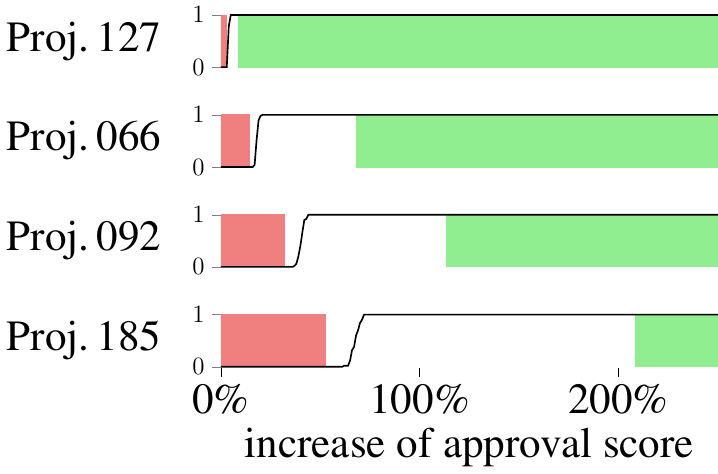}
  \caption{Lodz 2022}\label{fig:50:b}
\end{subfigure}
\caption{Line plots showing how the funding probability of a project develops from $0$ to $1$ when increasing its approval score by adding approvals uniformly at random to existing voters. The red area goes until the $\optimistadd$ value and the green area extends from the $\pessimistadd$ value.} 
\label{fig:50}
\end{figure} 

\begin{figure}[t]
    \centering
    \resizebox{0.55\textwidth}{!}{
    	\input{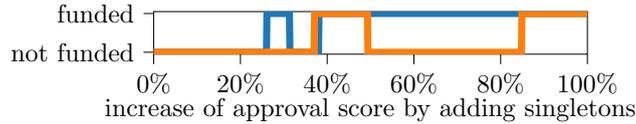}
    }
    \caption{Behavior of two projects when adding voters who only support the project, taken from Warsaw 2023 (Praga-Polnoc, in blue) and Warsaw 2017 (Goclaw, in orange).}
    \label{fig:addsingle}
\end{figure}

In \Cref{fig:50}, 
we show  for two instances how the funding probabilities of projects evolve when adding approvals uniformly at random to existing voters. 
Projects very quickly transition from having a funding probability close to $0\%$ to having one close to $100\%$, even for projects where there is a large gap between $\optimistadd$ and $\pessimistadd$.
This phase-transition-like behavior appears for almost all projects. Thus, for practical purposes, reasonably optimistic and pessimistic views coincide (i.e., requiring that a project gets funded with at least $2\%$ probability is very similar to requiring that it gets funded with at least $98\%$ probability). Thus,  $\randomadd$ gives us a good estimate of how many approvals a project ``practically misses" to get funded.

If one wants to reduce the number of measures, it is probably simplest to use $\singletonadd$, as it is strongly correlated with $\optimistadd$ and $\randomadd$ and can be explained very easily.
Another advantage is that even for more complicated or slower rules, where efficient algorithms for the other measures might be hard to find, $\singletonadd$ can be computed via brute-force  (this approach is unsatisfying algorithmically, but if one is simply to provide information to participants of a PB election  once a year, then it typically suffices). 
The only drawback of this measure for $\equalshares$ is that, as discussed in \Cref{add:single}, adding singletons can make a project lose. 
In fact, this occasionally happens on our data in unexpected ways (see \Cref{fig:addsingle} for examples). 

\begin{figure}[t!]
    \centering 
\begin{subfigure}{0.4\textwidth}
\includegraphics[width=\textwidth]{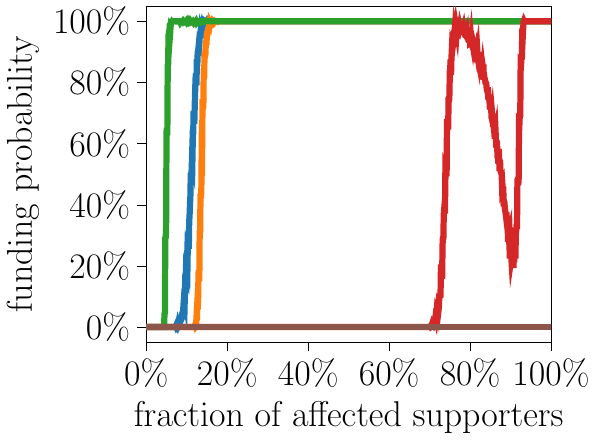}
  \caption{Boernerowo 2018}
\end{subfigure}
\qquad
\begin{subfigure}{0.4\textwidth}
\includegraphics[width=\textwidth]{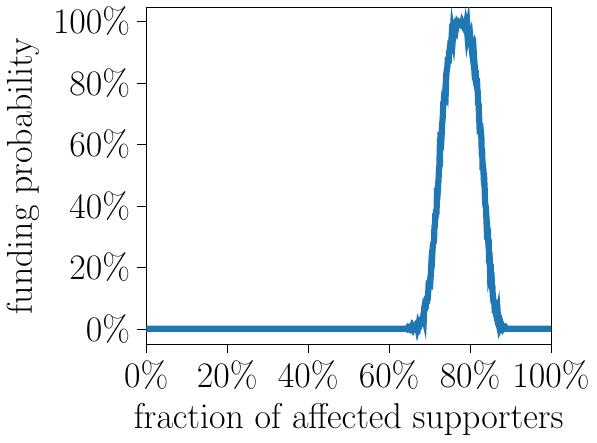}
  \caption{Kamionek 2019}
\end{subfigure}
\caption{Line plots showing how the funding probability of a project develops if we remove rivalry approvals from 
  its supporters selected uniformly at random (each line corresponds to a single project; all non-funded projects are shown). } 
\label{fig:rvex}
\end{figure} 

\begin{figure*}[t!]
    \centering 
\begin{subfigure}[t]{0.42\textwidth}
	\centering
\resizebox{0.9\textwidth}{!}{\begin{tikzpicture}

\definecolor{darkgray176}{RGB}{176,176,176}
\definecolor{steelblue31119180}{RGB}{25,95,144}

\begin{axis}[
tick align=outside,
tick pos=left,
x grid style={darkgray176},
xlabel={$\singletonadd$},
xmin=0, xmax=1,
xtick style={color=black},
y grid style={darkgray176},
ylabel={$\costred$},
ymin=0, ymax=1,
ytick style={color=black},every tick label/.append style={font=\LARGE}, 
label style={font=\LARGE},
grid=both,
grid style={line width=.1pt, draw=gray!10}
]
\addplot [draw=black, fill=black, mark=*, only marks,mark size=1.5pt]
table{%
x  y
0.961538461538462 0.84
0.628930817610063 0.56
0.961538461538461 0.9
0.478468899521531 0.44
0.537634408602151 0.49
0.546448087431694 0.47
0.245700245700246 0.16
0.423728813559322 0.12
0.334448160535117 0.14
0.367647058823529 0.11
0.609756097560976 0.28
0.440528634361233 0.17
0.917431192660551 0.81
0.746268656716418 0.37
0.344827586206897 0.09
0.478468899521531 0.3
0.78125 0.71
0.714285714285714 0.18
0.892857142857143 0.54
0.628930817610063 0.33
0.645161290322581 0.52
0.423728813559322 0.16
0.478468899521531 0.25
0.478468899521531 0.18
0.602409638554217 0.29
0.355871886120996 0.14
0.251256281407035 0.07
0.568181818181818 0.24
0.636942675159236 0.34
0.564971751412429 0.29
0.892857142857143 0.84
};
\addplot [draw=steelblue31119180, fill=steelblue31119180, mark=*, only marks,mark size=1.8pt]
table{%
x  y
0.793650793650794 0.76
0.970873786407767 0.8
0.869565217391304 0.29
};
\fill[steelblue31119180!20!white] (axis cs:0.793650793650794,0.76) circle (0.3cm);
\fill[steelblue31119180!20!white] (axis cs:0.970873786407767,0.8) circle (0.3cm);
\fill[steelblue31119180!20!white] (axis cs:0.869565217391304,0.29) circle (0.3cm);
\draw (axis cs:0.961538461538462,0.84) node[
  scale=0.53,
  anchor=base west,
  text=black,
  rotate=0.0
]{\LARGE 8};
\draw (axis cs:0.628930817610063,0.56) node[
  scale=0.53,
  anchor=base west,
  text=black,
  rotate=0.0
]{\LARGE 13};
\draw (axis cs:0.949538461538461,0.91) node[
  scale=0.53,
  anchor=base west,
  text=black,
  rotate=0.0
]{\LARGE 16};
\draw (axis cs:0.793650793650794,0.76) node[
  scale=0.7,
  anchor=base west,
  text=steelblue31119180,
  font=\bfseries,
  rotate=0.0
]{\LARGE 18};
\draw (axis cs:0.940873786407767,0.74) node[
  scale=0.7,
  anchor=base west,
  text=steelblue31119180,
  font=\bfseries,
  rotate=0.0
]{\LARGE 21};
\draw (axis cs:0.478468899521531,0.44) node[
  scale=0.53,
  anchor=base west,
  text=black,
  rotate=0.0
]{\LARGE 27};
\draw (axis cs:0.535634408602151,0.50) node[
  scale=0.53,
  anchor=base west,
  text=black,
  rotate=0.0
]{\LARGE 30};
\draw (axis cs:0.546448087431694,0.44) node[
  scale=0.53,
  anchor=base west,
  text=black,
  rotate=0.0
]{\LARGE 31};
\draw (axis cs:0.245700245700246,0.16) node[
  scale=0.53,
  anchor=base west,
  text=black,
  rotate=0.0
]{\LARGE 38};
\draw (axis cs:0.423728813559322,0.12) node[
  scale=0.53,
  anchor=base west,
  text=black,
  rotate=0.0
]{\LARGE 44};
\draw (axis cs:0.3,0.16) node[
  scale=0.53,
  anchor=base west,
  text=black,
  rotate=0.0
]{\LARGE 47};
\draw (axis cs:0.367647058823529,0.1) node[
  scale=0.53,
  anchor=base west,
  text=black,
  rotate=0.0
]{\LARGE 48};
\draw (axis cs:0.599756097560976,0.237) node[
  scale=0.53,
  anchor=base west,
  text=black,
  rotate=0.0
]{\LARGE 51};
\draw (axis cs:0.431528634361233,0.19) node[
  scale=0.53,
  anchor=base west,
  text=black,
  rotate=0.0
]{\LARGE 52};
\draw (axis cs:0.875431192660551,0.755) node[
  scale=0.53,
  anchor=base west,
  text=black,
  rotate=0.0
]{\LARGE 54};
\draw (axis cs:0.746268656716418,0.37) node[
  scale=0.53,
  anchor=base west,
  text=black,
  rotate=0.0
]{\LARGE 55};
\draw (axis cs:0.344827586206897,0.05) node[
  scale=0.53,
  anchor=base west,
  text=black,
  rotate=0.0
]{\LARGE 59};
\draw (axis cs:0.478468899521531,0.31) node[
  scale=0.53,
  anchor=base west,
  text=black,
  rotate=0.0
]{\LARGE 63};
\draw (axis cs:0.78425,0.675) node[
  scale=0.53,
  anchor=base west,
  text=black,
  rotate=0.0
]{\LARGE 64};
\draw (axis cs:0.714285714285714,0.18) node[
  scale=0.53,
  anchor=base west,
  text=black,
  rotate=0.0
]{\LARGE 65};
\draw (axis cs:0.892857142857143,0.54) node[
  scale=0.53,
  anchor=base west,
  text=black,
  rotate=0.0
]{\LARGE 66};
\draw (axis cs:0.869565217391304,0.29) node[
  scale=0.7,
  anchor=base west,
  text=steelblue31119180,
  font=\bfseries,
  rotate=0.0
]{\LARGE 67};
\draw (axis cs:0.578930817610063,0.34) node[
  scale=0.53,
  anchor=base west,
  text=black,
  rotate=0.0
]{\LARGE 68};
\draw (axis cs:0.645161290322581,0.51) node[
  scale=0.53,
  anchor=base west,
  text=black,
  rotate=0.0
]{\LARGE 72};
\draw (axis cs:0.388728813559322,0.18) node[
  scale=0.53,
  anchor=base west,
  text=black,
  rotate=0.0
]{\LARGE 78};
\draw (axis cs:0.478468899521531,0.25) node[
  scale=0.53,
  anchor=base west,
  text=black,
  rotate=0.0
]{\LARGE 79};
\draw (axis cs:0.478468899521531,0.17) node[
  scale=0.53,
  anchor=base west,
  text=black,
  rotate=0.0
]{\LARGE 80};
\draw (axis cs:0.608409638554217,0.285) node[
  scale=0.53,
  anchor=base west,
  text=black,
  rotate=0.0
]{\LARGE 81};
\draw (axis cs:0.348871886120996,0.15) node[
  scale=0.53,
  anchor=base west,
  text=black,
  rotate=0.0
]{\LARGE 82};
\draw (axis cs:0.251256281407035,0.07) node[
  scale=0.53,
  anchor=base west,
  text=black,
  rotate=0.0
]{\LARGE 83};
\draw (axis cs:0.568181818181818,0.2) node[
  scale=0.53,
  anchor=base west,
  text=black,
  rotate=0.0
]{\LARGE 84};
\draw (axis cs:0.636942675159236,0.33) node[
  scale=0.53,
  anchor=base west,
  text=black,
  rotate=0.0
]{\LARGE 85};
\draw (axis cs:0.53471751412429,0.305) node[
  scale=0.53,
  anchor=base west,
  text=black,
  rotate=0.0
]{\LARGE 86};
\draw (axis cs:0.892857142857143,0.84) node[
  scale=0.53,
  anchor=base west,
  text=black,
  rotate=0.0
]{\LARGE 87};
\end{axis}
\end{tikzpicture}}
  \caption{Correlation plots for $\costred$ and $\singletonadd$. Larger font and pale background indicate the three projects that we focus on.}\label{fig:wiel:a}
\end{subfigure}\qquad
\begin{subfigure}[t]{0.42\textwidth}
	\centering
\includegraphics[height=4.6cm]{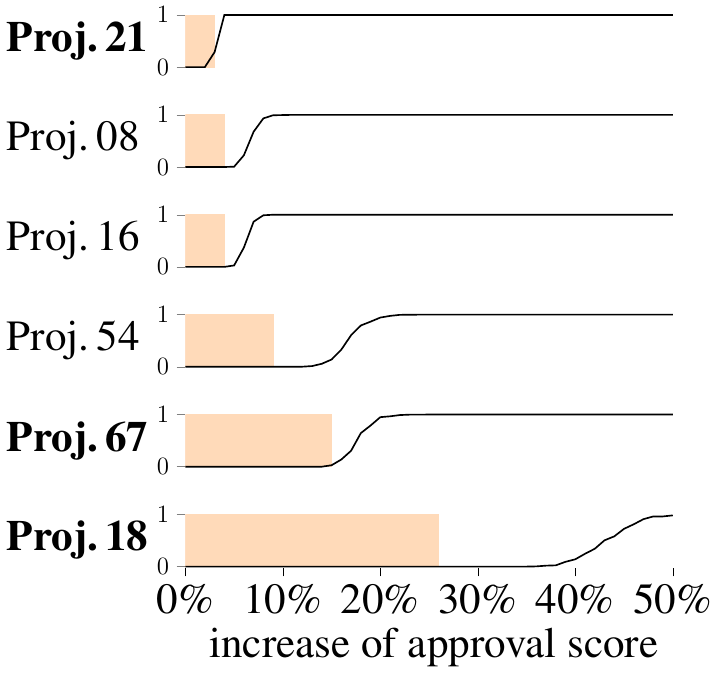}
  \caption{Funding probability when adding approvals uniformly at random to voters. Orange areas go to the $\singletonadd$ value.}\label{fig:wiel:b}
  \vspace{15pt}
\end{subfigure}

\begin{subfigure}[t]{0.42\textwidth}
\centering
\includegraphics[height=4.6cm]{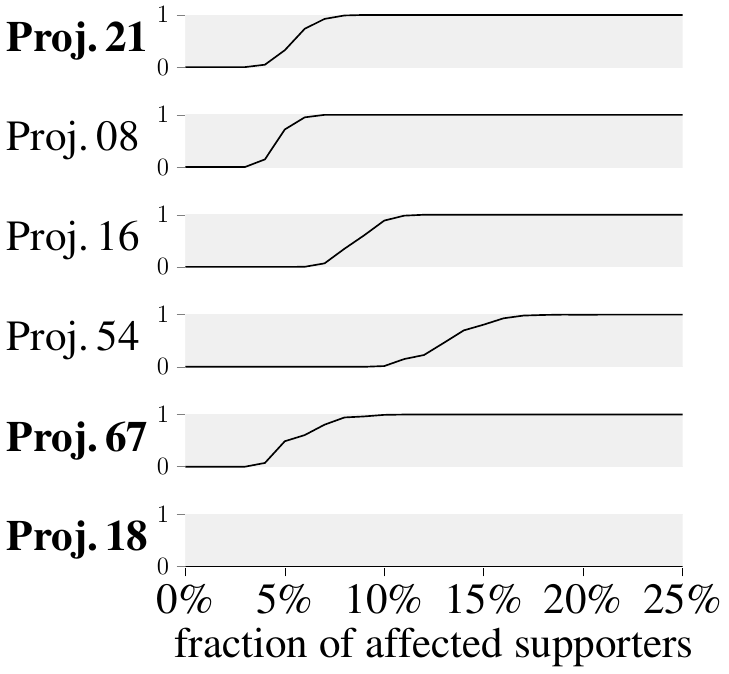}
  \caption{Funding probability when removing rivalry approvals from some uniformly at random selected group of supporters.}\label{fig:wiel:c}
\end{subfigure}
\qquad
\begin{subfigure}[t]{0.42\textwidth}
\centering
\vspace{-4.6cm}
\includegraphics[height=2.3cm]{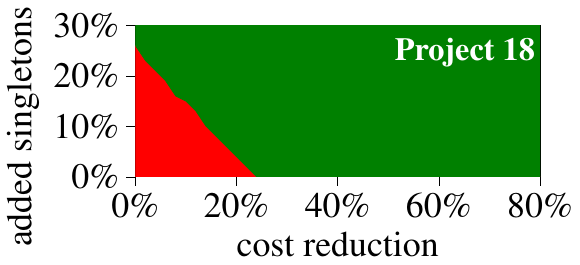}
\includegraphics[height=2.3cm]{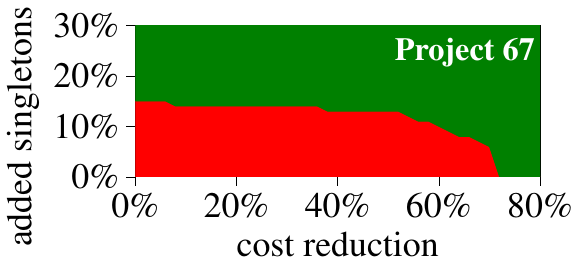}
  \caption{Tradeoff between fractional reduction of costs and increase of approval score by adding singletons.}\label{fig:wiel:d}
\end{subfigure}
\caption{Information on the  Wieliczka 2023 Green Million election.} 
\label{fig:wiel}
\end{figure*} 

\paragraph{Rivalry-Reduction Measure.}
Our sampling-based algorithm for $\rivalred$ returns a value for
$1605$ out of the $3639$ not-funded projects, which in particular implies that we can get these projects funded after removing all other approvals from the ballots of some (or all) of their supporters. 
This highlights the general power of lobbying one's supporters to only approve a single project in an election using a proportional rule.  
Regarding the correlation of $\rivalred$ with the other measures, 
\Cref{tab:PCC} shows the PCC correlation on the $1605$ projects where $\rivalred$ returns a value.
The correlation with the adding approval measures is strong (but not very strong). 
One interesting observation is that for projects where $\optimistadd$ is below $0.5$ (i.e., for projects whose approval score needs to at least double to be able to win), removing rivalry approvals is almost never sufficient to get funded.  
Regarding the $2034$ projects that cannot be funded after removing rivalry approvals, they have a very diverse performance with regard to the other measures.
For example, removing rivalry approvals might not be sufficient even for projects that are only missing a few approvals. 

Interestingly, removing more rivalry approvals does not necessarily
help a project.  The reason for this is that by removing the approval
of a rival, we can modify the execution of $\eqphragmen$ in arbitrary
rounds and thereby also help other projects to get funded.
Interestingly, this non-monotonic behavior of projects appears in many
forms.  \Cref{fig:rvex} shows two
examples. 
Nevertheless,
for rivalry reduction, for most projects there is also a quick jump from an almost $0\%$ to an almost $99\%$ funding probability. 

\paragraph{Cost-Reduction Measure.}
The cost reduction measure is less correlated to the other measures, which is expected since it is the only measure that modifies project costs. 
Yet, as all our measures help projects in some way, it is intuitive that there is a certain correlation (of around $0.75$) to the other measures. 
To analyze the tradeoff between adding approvals and reducing costs in more detail, \Cref{fig:corr} (right) shows the connection between $\costred$ and $\singletonadd$. 
We see that for a majority of projects, adding approvals is more powerful than reducing the project's cost. 
However, there are also numerous projects where it is the other way around, which makes it hard to make a general recommendation. 

\subsection{Wieliczka's Green Million} \label{wiel} We conclude by
constructing an information package for the PB election held in
Wieliczka in 2023, where
$\equalshares$ was used for the first time (\url{https://equalshares.net/resources/zielony-milion/}). This PB election focused
on ecological issues, with 64 projects placed on the ballot. 6586
people cast their votes. Each project cost up to 100\,000 PLN, and each
voter could approve any number of projects. The budget was one million
PLN (approximately 225\,000 EUR).
The vote data is available on \href{http://pabulib.org/}{Pabulib}.
Notably, Green Million used a more involved completion method of
$\equalshares$ which increases the initial balances of the voter's
bank accounts in small steps (see \Cref{app:prelim} for a description).  This is
computationally intensive, so we focus on the easier-to-compute
measures: $\costred$, $\singletonadd$, $\rivalred$, and $\randomadd$.

\Cref{fig:wiel} shows the results. In
particular, \Cref{fig:wiel:a} depicts the correlation between
$\singletonadd$ and $\costred$, \Cref{fig:wiel:b} how the funding
probabilities of projects change when adding approvals to existing
voters at random, and \Cref{fig:wiel:c} their behavior when removing
rivalry approvals.  Below, for three projects cherry-picked for interesting conclusions, we discuss the contents of a hypothetical information
package produced for the Green Million PB, and what project proposers could learn from it.
The analysis reinforces the conclusion that projects close to
being funded regularly appear and that our measures contribute
different perspectives.

\paragraph{Project 21 (cost 100\,000 PLN, 496 votes).} According to
our measures, this project was very close to winning.  Indeed, it needs
only about 3\% additional singleton votes or additional approvals to
be funded. It was also close to winning in terms of rivalry reduction, 
in the sense that relatively few of its supporters (below 10\%) would need to
refrain from supporting other projects in order for Project~21 to be
funded. On the other hand, the cost-reduction measure shows that the
project would have to be 20\% cheaper to be selected, given the
current votes. This indicates a strong project that
probably should be resubmitted in the next edition of the program,
with a slightly more aggressive support campaign. Reducing the cost of the project would be somewhat less effective.

\paragraph{Project 18 (cost 51\,000 PLN, 163 votes).}
The project performs similarly under $\singletonadd$ and
$\costred$; both measures indicate that the project was about $80\%$ on the way to winning. 
With respect to both of these measures, only 5 other losing projects do better.
However, Project~18's $\randomadd$ performance is
much worse, and the project continues to lose even if we remove
rivalry approvals from all its supporters.  This indicates that the
main problem of Project~18 is insufficient votes: Even in the first
round of $\equalshares$, its supporters don't have enough money available to buy the
project. Adding approvals to existing voters helps less with this
funding gap, as these voters might spend their budget in earlier
rounds on more popular projects.

\paragraph{Project 67 (cost 16\,500 PLN, 140 votes).}
Project~67 needs around $15\%$ more approvals according to
$\singletonadd$ and $\randomadd$, but its cost would need to be
reduced by $70\%$ to get funded.  Thus
Project 67 is not so far off in terms of the number of supporters it
has, but under $\equalshares$, those supporters spend almost all 
of their money on more popular projects
before Project~67 is considered by the rule. This
interpretation is confirmed by its performance with
respect to $\rivalred$, where its performance matches projects like
Project~21, which is very close to being funded according to all measures.
Hence, Project~67's main issue is competition with projects supported by the same voters.

\paragraph{Combined Strategy.}
One might also ask whether a combined strategy of slightly lowering
the cost of a project and getting a few more singleton voters could be
effective for our projects. We illustrate this approach in
\Cref{fig:wiel:d}.  The $x$-axis shows the percentage by which the
cost is reduced, and the $y$-axis shows the increase of the approval score of
the project (by adding singletons). A point is colored green if the project would 
be funded if both actions were performed, and red if it would continue to lose. 
We see that for Project~21 we can exchange additional
votes for cost reduction in a linear way, but for Project~67 the
behavior is highly nonlinear.

\section{Future Work}\label{sec:conclusions}
We have begun the work of designing useful information packages for voters and project proposers.
In future work, it would be useful to collect feedback from participants about what
measures and visualizations are of most interest.
There is also room for developing additional measures or improving some of the presented ones.
For instance, as an additional measure, one could consider changes in the overall budget, to be more precise, the minimum budget that is needed for the designated project to get funded. 
Moreover, for presented measures that increased the score of project $p$, one could try to prioritize
adding approvals to voters who are most likely to vote for $p$ 
(e.g., because they are similar
to current supporters of $p$).

\section*{Acknowledgments}
The main work was done while Niclas Boehmer was affiliated with TU
Berlin, where he was supported by the DFG project ComSoc-MPMS (NI
369/22).  This work was co-funded by the European Union under the
project Robotics and advanced industrial production
(reg. no. CZ.02.01.01/00/22\_008/0004590). Šimon Schierreich also
acknowledges the support of the Grant Agency of the Czech Technical
University in Prague, grant No. SGS23/205/OHK3/3T/18.  This project
has received funding from the European Research Council (ERC) under
the European Union’s Horizon 2020 research and innovation programme
(grant agreement No 101002854).  Piotr Skowron was supported by the
European Research Council (ERC; project PRO-DEMOCRATIC, grant
agreement No 101076570).

\begin{center}
   \includegraphics[width=3cm]{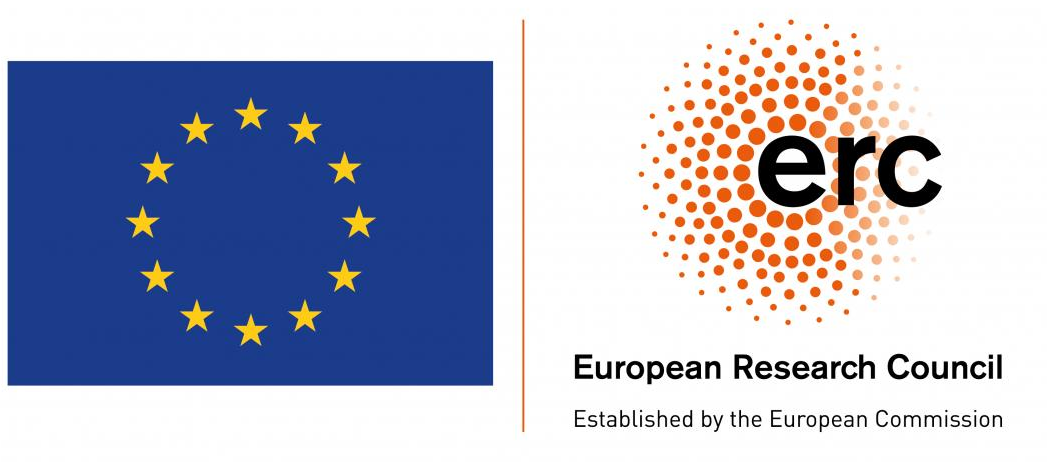}
\end{center}

\bibliographystyle{plainnat}

\clearpage

\appendix
\addtocontents{toc}{\protect\setcounter{tocdepth}{1}}

\section{Additional Material for \Cref{sec:prelim}}\label{app:prelim}
In this section, we describe the variant of $\equalshares$ used in Wieliczka's Green Million election in 2023. 

In the first round, the standard $\equalshares$ procedure is run. If the elected outcome is not exhaustive, the execution of $\equalshares$ is repeated with incremented values of voters' initial endowments (so that they are equal to $\nicefrac{B}{|V|}+1$, $\nicefrac{B}{|V|}+2$, $\ldots$ in further rounds). The procedure stops after the $i$th round if one of the following conditions is satisfied:
\begin{enumerate}
    \item An exhaustive (with respect to the budget $B$) outcome $W$ has been elected. Then $W$ is returned.
    \item The total cost of the elected outcome exceeds $B$. Then the outcome $W$ elected after the $(i-1)$th round is returned. Note that in this case, we know that $i > 1$ (since in the first round there are only $B$ money units in the system and the cost of every elected project is covered) and the cost of $W$ does not exceed $B$.
\end{enumerate}

At each step of the procedure, possible ties between projects are broken according to a randomly chosen priority ranking of projects.

\section{Additional Material for \Cref{sec:measures}}

\subsection{When Are Our Measures Undefined}\label{sec:undefined}

Below we list under what conditions are our measures defined:
\begin{description}
\item[Cost reduction.] For each of our rules, the cost-reduction measure is
  well-defined for projects that received at least a single approval
  (in the worst case we lower the cost to zero; for $\greedyav$ even
  the single approval is not necessary).

\item[Adding approvals.] The three adding-approvals measures
    (optimist, pessimist, and the randomized one) are defined in
    almost all cases: It suffices that all the voters approve a given
    project~$p$.  The exception happens if there is another project
    that already is approved by all the voters, the rule selects it
    prior to~$p$ (e.g., due to tie-breaking or the cost), and the
    remaining budget is insufficient for~$p$.

  \item[Adding singletons.] One can verify that the adding-singletons
    measure is well-defined for $\greedyav$ and $\phragmen$. For
    $\equalshares$, it is well-defined provided that project $p$ that
    we consider costs less than the available budget or it costs as much as the budget
    and is supported by all the voters.

\end{description}

\subsection{Cost-Reduction Measure}

\propCostRedPoly*
\begin{proof}
    Recall that the $\av$ rule considers projects in the order of their non-increasing approval scores (that is, the order depends solely on the number of approvals and is not affected by the cost of the projects) and a project is included in the outcome~$W$ if doing so will not exceed the available budget. Let~$i\in\mathbb{N}$ be the round in which the rule considers project~$p$ and~$B'$ be the remaining budget at the beginning of round~$i$. Since~$p$ was not initially funded, we have~$\cost(p) > B'$, however, setting~$\cost(p) = B'$ makes the project affordable. Therefore, $\costred^\av_E(p) = B'$.

    For $\ph$, we first ignore the project~$p$ and execute the rule with the remaining candidates. This rule in~$r$ time steps adds some projects to the outcome. For each time step~$t_1,\ldots,t_r$, we compute the endowments $E_1,\ldots,E_r$ of supporters of our distinguished candidate~$p$. The result is then the maximum~$x_i$ on all time steps $i\in[r]$ in which $\cost(W) + x_i \leq B$, where $x_i = \min\{E_i, B - \cost(W)\}$. Suppose that $x$ is the maximum possible cost so that even after setting $\cost(p) = x$, the project $p$ is funded. According to the definition of the rule, the project is funded in some time step $t_i$, $i\in[r]$. If $x > B - \cost(W)$, then~$p$ cannot be added to the outcome since there is not enough remaining budget. If $x > E_i$, then~$p$ cannot be founded in this round, as the endowments of supporters of $p$ are not high enough. Therefore, $x$ is at most $\min_{j\in[r]}\{E_j,B-\cost(W)\}$, which is exactly the result of our algorithm.

    Finally, for $\eq$, we again examine the execution of the rule without the project~$p$. We compute the value for each iteration separately and then return the maximum. 
    Assume that in round $i$ project $d$ is funded as it is $q$-affordable. 
    Then we want to find a cost value $x$ such that ${\sum_{v_j\in A(c)}\min(b_i(v_j),q\cdot x)= x}$. 
    Observe that $\sum_{v_j\in A(c)}\min(b_i(v_j),q\cdot x)$ is monotonically increasing and piecewise linear. 
    Let $b^1\leq \dots \leq b^n$ be the budget values of the voters.
    Then for $x\in \left[\frac{b_i}{q},\frac{b_{i+1}}{q}\right]$ the function is linear as $(n-i)\cdot q \cdot x + \sum_{j=1}^i b_j$. 
    We can first find the value of $b_i$ such that the solution lies within $\left[\frac{b_i}{q},\frac{b_{i+1}}{q}\right]$  and then simply check within this interval. 
    If the solution lies in no such interval, then we know $x>b^n$  or $x<b^1$. We can handle these cases.
\end{proof}

\subsection{Add-Approvals Measure}

\subsubsection{Optimistic Add}
\proOptimistAddPoly*
\begin{proof}
    We start with the $\av$ rule. Crucial observation in this case is that for the rule, it does not matter where we add approvals. So, we run the rule on the original instance as usual. The interesting part of the process takes place once the funding of the currently processed project $c$ first decreases the value of the remaining budget below $\cost(p)$. Before funding $c$, we compute $m^*$ as $|A(c)| - |A(p)| + [c \succ p]$, where $[ c \succ p ]$ evaluates to $1$ if $c$ precedes $p$ in tie-breaking order and to $0$ otherwise. If $m^* + |A(p)| > |V|$, then the measure is not defined, and we terminate. Otherwise, the algorithm outputs the value of $m^*$. For the correctness, suppose that $\optimistadd^\av_E(p) = m' < m^*$. If we add $m'$ additional voters that approve $p$, then $p$ is assumed by the rule in some round after $c$ is funded. However, $c$ was selected so that, after funding $c$, the remaining budget is below $\cost(p)$. Consequently, adding $m'$ additional voters is not sufficient to fund $p$ and therefore $\optimistadd_E^\av(p) \geq m^*$. Since, by the definition of the rule, the distinguished project $p$ is funded after adding $m^*$ approvals, the algorithm is correct and clearly performs only $O(1)$ additional operations per round.

    For $\ph$, the high-level idea is to examine each round separately and return the minimum number of additional approvals needed to fund $p$. More formally, let $i$ be a round such that there is still enough money to fund $p$ (otherwise, we stop the execution of the rule and return the currently stored minimum). Let $f_i$ be the money that the distinguished candidate lacks in order to be funded in round $i$, and let $X\subseteq V$ be the set of voters not approving $p$. We sort the voters in $X$ according to their current balance and select the minimum possible number $m_i$ of voters (in descending order according to their balances) such that the current endowment of $p$ plus the summed endowments of selected voters from $X$ is at least $\cost(p)$. The outcome of the algorithm is then the minimum for all $m_i$. In each round, we additionally sort the voters in~$X$ according to their current balance, which can be done in $O(n\log n)$ time, where $n=|V|$.
     
    Finally, with $\eq$ we look again at each round where there is still enough money left to buy $p$, compute the number of additional approvals we need to fund $p$ in this round, and return the minimum over these values for all the examined rounds. Let $i$ be a round currently processed, $q_i$ be the minimum $q$ such that a project is $q$-affordable in round $i$, and $f_i$ be the missing money to fund $p$. Additionally, we set $X\subseteq V$ as the set of voters that do not approve $p$ and $e^i_v$ as the endowment of the voter $v\in X$ in the round $i$. We sort the voters in $X$ in descending order according to their endowments; assume that $e^i_{x_1}\geq \dots \geq e^i_{x_n}$. Now, we find the minimum $m_i$ such that $\sum_{t}^{m_i} \min(e^i_{x_t},\cost(p)\cdot q_i) \geq \cost(p)$. If there are still enough funds in the final round, we do the same thing with $q_i=\infty$. Is in the case of $\ph$, in every round we additionally sort all voters in $X$ and do a linear-time computation to determine $m_i$. That is, for $\eq$, we again need $O(n\log n)$ additional operations per round.
\end{proof}

\subsubsection{Pessimist Add}

\thmphragmenpessimist*
\begin{proof}
  Membership in $\conp$ is clear: Given an instance of a problem it
  suffices to guess $\ell$ voters and add the desiginated project to
  their approval sets. Then, accept on computation paths where the
  project is funded and reject on those where it is not. If
  $\pessimistadd^\ph_E(p) \leq \ell$, then all paths accept and
  otherwise at least one path rejects.

  To show $\conp$-hardness, we will work with a complement of our
  problem: Given a PB instance $E$, a losing project $p$, and
  nonnegative integer $\ell$, we ask if it is possible to add $\ell$
  approvals for $p$ without getting it to be funded. We show that this
  problem is $\np$-hard by giving a reduction from $\rxthreec$, a
  restricted variant of the \textsc{Exact Cover by 3-Sets}
  problem~\cite{gon:j:x3c}. In this problem we are given a universe
  $U = \{u_1, u_2, \ldots, u_n\}$ of elements and a collection
  $\calS = \{S_1, S_2, \ldots, S_n\}$ of size-$3$ subsets of $U$.
  Each element $u_i \in U$ appears in exactly three sets from $\calS$
  and $n$ is divisible by three.  We ask whether there exists an exact
  cover over $U$, that is, a collection of sets from $S$ such that
  each element from $U$ appears in exactly one set.  In this case,
  since each set has size three, each valid exact cover must have size
  $\nicefrac{n}{3}$.

  W.l.o.g., we assume that $n \geq 60$ and $n = 54q + 6$ for some
  $q \in \naturals$.  Otherwise, we could keep adding three new
  elements $e_1, e_2, e_3$ and three new (identical) sets
  $\{e_1, e_2, e_3\}, \{e_1, e_2, e_3\}, \{e_1, e_2, e_3\}$ until our
  condition were met (we interpret $\calS$ is a multiset).  These
  newly added sets do not interfere with the original ones, so the
  solution for the modified instance always consists of a solution for
  the original one plus one copy of each newly created set.

  \paragraph{Construction.}
  We form a PB instance with project set
  $P = \{p\} \cup U \cup D \cup \{b_1, b_2\}$, where $p$ is the
  designated project, $U$ is the universe (so each of its members
  doubles as a project), $D = \{d_1, \ldots, d_n\}$ is a set of dummy
  candidates, and projects $b_1$ and $b_2$ are used to initiate
  appropriate bank-account balances for the voters. Each project has
  the same unit cost.  To introduce the voters, we need the following
  notation:
  \begin{align*}
    w &= \nicefrac{1}{18}(11n^{10}+3n^9), \\
    z &= \nicefrac{1}{18}(5n^{10}+3n^9), \\
    y &= \nicefrac{1}{18}(4n^{10}-3n^9), \\
    \alpha &= \nicefrac{1}{27}(4n^{10}-3n^9-6n-18), \text{ and}\\
    s &= \nicefrac{n^9}{6}.
  \end{align*}
  Numbers $z, y, w$ and $s$ are positive integers because $n$ is
  divisible by $6$.  Further:
  \begin{align*}
    \alpha   & = \nicefrac{1}{27}(4n^{10}-3n^9-6 \cdot (54q+6)-18)\\
             &= \nicefrac{1}{27}(4n^{10}-3n^9-12 \cdot 27q-2 \cdot 27),
  \end{align*}
  so $\alpha$ also is a positive integer. Finally, we see that
  $\alpha < y$.  We create four groups of voters as follows:
  
  \begin{description}
  \item[Group $\boldsymbol W$:] We have $w$ voters approving only
    $b_1$ and $b_2$.

  \item[Group $\boldsymbol Z$:] We have $z$ voters approving candidates
    $\{p\} \cup U$.
    
  \item[Group $\boldsymbol Y$:] We have $y$ voters approving candidates
    $D \cup \{b_1, b_2\}$. Further, first $\alpha$ of them also
    approve $p$ (we refer to them as group $A$; naturally, the
    $A$-voters also belong to group $Y$).
  \item[Group $\boldsymbol S$ (Set Voters):] For each $S_i \in \calS$,
    we create $s$ voters approving candidates
    $D \cup \{b_1, b_2\} \cup (U \setminus S_i)$. That is, each of
    them approves $n + 2 + (n - 3) = 2n-1$ candidates (recall that
    $|S_i| = 3)$. There are $s \cdot n$ set voters in total and each
    universe candidate is approved by exactly $s \cdot (n-3)$ of them.
  \end{description}
  Finally, we set the budget to be $B = 2n+2$, and we set the
  tie-breaking order to be:
  \[
     b_1 \succ b_2 \succ p \succ u_1 \succ d_1 \succ u_2 \succ d_2 \succ
     \cdots \succ u_n \succ d_n.
  \]
  We ask if we can add $\ell = \nicefrac{n}{3}$ approvals in such a
  way that~$p$ does not get funded.  We introduce the
  following constants:
  \begin{align*}
    t_1 = \nicefrac{1}{n^{10}} &,& t_2 = \nicefrac{2}{n^{10}},
  \end{align*}                                 
  representing time periods after which we buy some candidates (this
  will become clear a bit later).

  \paragraph{Intuitive Idea.}
  Let us now explain the intuition behind our construction. For the
  sake of simplicity, let us assume that approvals can only be added
  to the set voters. Later we will show why this assumption is valid.

  Consider our PB instance after $\ell$ extra approvals for $p$ were
  added to some of the set voters. $\phragmen$ proceeds as follows:
  First, at time $\nicefrac{t_2}{2}$ the rule selects $b_1$ and at
  time $t_2$ the rule selects $b_2$. Consequently, at this moment each
  of the $Z$-voters has budget $t_2$ and all other voters have budget
  $0$.  Then, there are $n$ \emph{iterations} which consist of
  selecting members of $U$ and $D$ in an interleaved fashion.  At the
  beginning of the $i$-th iteration, each $Z$-voter has budget $t_2$,
  whereas the set voters and the $Y$-voters have budgets $0$. The
  $W$-voters do not play any role since all their projects have
  already been purchased.  After time $t_1$ (within the iteration)
  project $u_i$ is purchased by the voters from group $Z$ and the set
  voters corresponding to sets that do not include $u_i$.  Project $p$
  cannot be purchased at this moment regardless of how we add the
  approvals.  Then, after time $t_2$ since purchasing $u_i$, the
  voters either buy $d_i$ or $p$ (the latter happens if we added
  approvals to at least two set voters that did not pay for $u_i$,
  i.e., to two set voters who correspond to sets that include $u_i$).
  If the voters purchase $d_i$ instead of $p$, then all the $Y$-voters
  and all the set voters have their budgets reset to $0$, whereas each
  $Z$-voter has budget $t_2$. Hence the situation is the same as at
  the beginning of the iteration. Then the $(i+1)$-th iteration starts.

  After $n$ iterations we run out of budget. If $p$ were not selected,
  then it means that for each iteration, after buying a universe
  project we purchased a project from $D$. This means that we must
  have added approvals for $p$ to set voters that correspond to an
  exact cover of $U$ (the reverse direction is also immediate: If we
  add approvals to set voters corresponding to an exact cover then $p$
  is not purchased).  In other words, $p$ can lose after adding
  $\ell=\nicefrac{n}{3}$ approvals if and only if there exists an
  exact cover in our $\rxthreec$ instance.

  \paragraph{Formal Argument (Approvals Added to Set Voters Only).}
  Now we formally argue that the execution of the rule proceeds as
  described above, assuming that all the additional approvals for $p$
  went to the set voters (below we give a number of (in)equalities; we
  give the somewhat tedious explanation as to why the hold at the end
  of the proof):
  \begin{description}
  \item[Initialization.] Altogether, there are $y + s \cdot n + w$
    voters who approve $b_1$ (and $b_2$). Thus, after time
    $\nicefrac{t_2}{2}$ since the beginning of the execution of
    $\phragmen$, voters supporting~$b_1$ have enough funds to purchase
    it (and, analogously, after the following $\nicefrac{t_2}{2}$
    chunk of time they have enough funds to also purchase
    $b_2$). Indeed, we have that:
    \[
      (y + s \cdot n + w) \cdot \nicefrac{t_2}{2} = 1.
    \]
    All the other projects have far fewer approvals, so~$b_1$
    and~$b_2$ are purchased first. Consequently, at time $t_2$ since
    the beginning of the execution, $b_1$ and $b_2$ are funded.

  \item[Budgets at the beginning of each iteration.]  Assuming that
    project $p$ has not been funded, at the beginning of each
    iteration each $Z$-voter has budget $t_2$ and all the other voters
    have budgets $0$.  This is exactly the situation right after
    project $b_2$ is selected, when the first iteration starts.

  \item[Iteration~$\boldsymbol i$ until time $\boldsymbol{t_1}$.] Let
    us consider what happens during the $i$-th iteration until time
    $t_1$ since its beginning. First, as witnessed by the following
    equality, at time~$t_1$ voters supporting $u_i$ have enough funds
    to purchase it:
    \[
      z \cdot (t_1 + t_2) + s \cdot (n-3) \cdot t_1 = 1.
    \]
    (Indeed, each $Z$-voter had $t_2$ money at the beginning of the
    round and obtained another $t_1$ until time $t_1$; further, each
    of the $s \cdot (n-3)$ set voters that approve $u_i$ had budget
    $0$ at the beginning of the round and earned $t_1$ until this
    moment.)  Naturally, voters supporting projects
    $u_{i+1}, \ldots, u_n$ also have sufficient amounts of money to
    buy them, but they lose to $u_i$ due to the tie-breaking order
    (projects $u_1, \ldots, u_{i-1}$ were purchased in prior
    iterations\footnote{Indeed, formally we should express our
      reasoning as an inductive proof, but we believe that the current
      approach is sufficiently clear and a bit lighter.}).  It remains
    to argue that neither $p$ nor any of the $D$-projects can be
    purchased at time $t_1$ or prior to it. For the case of~$p$, we
    see that at time $t_1$ its voters have total budget:
    \[
      z \cdot (t_1 + t_2) + \ell \cdot t_1 + \alpha \cdot t_1 < 1
    \]
    which does not suffice to buy $p$.  Finally, at time $t_1$ since
    the beginning of the iteration, each $D$-project is supported by
    voters who in total have budget:
    \[
      y \cdot t_1 + s \cdot n \cdot t_1 < 1
    \]
    and, so, neither of the $D$-projects can be purchased.
    Consequently, at time $t_1$ since the begining of the iteration,
    project $u_i$ is funded and its supportes have their bank accounts
    reset to $0$.

  \item[Iteration $\boldsymbol i$ from time $\boldsymbol t_1$ until
    time $\boldsymbol{t_1+t_2}$.]  After time $t_1+t_2$ since the
    beginning of the iteration, voters approving project $d_i$ have
    enough funds to purchase it. Indeed, we have:
    \[
      y \cdot (t_1 + t_2) + s \cdot n \cdot t_2 + 3s \cdot t_1
      = 1
    \]
    (note that by this time the $Y$-voters earned $t_1+t_2$ each, each
    set voter earned $t_2$ since time $t_1$, and the budgets of the
    $3s$ set voters corresponding to sets that include $u_i$ were not
    reset to zero at time $t_1$, so they still have the money they
    earned since the beginning of the iteration).  Analogously as in
    the preceding paragraph, projects $d_{i+1}, \ldots, d_n$ also are
    supported by voters who can afford them at time $t_1+t_2$ but
    $d_i$ wins due to the tie-breaking order (and, assuming that $p$
    were not selected yet, the other $D$-projects were purchased in
    previous rounds).\smallskip

    At time $t_1+t_2$ since the beginning of the iteration, each of
    the not-yet-selected universe projects is supported by voters who
    in total have at most:
    \[
      z \cdot t_2 + s \cdot (n-3) \cdot t_2 + 3s \cdot t_1 < 1.
    \]
    money, so neither of them can be purchased at this time.\smallskip

    Finally, let us consider the amount of money that voters
    supporting $p$ have at time $t_1+t_2$. To this end, let $g_i$ be
    the number of set voters whose corresponding sets contain~$u_i$
    (and, hence, who did not spend their money since the beginning of
    the iteration) and who also got additional approvals for
    $p$. Consequently, at time $t_1+t_2$ since the beginning of the
    iteration, voters approving $p$ have the following amount of money
    (assuming that $p$ was not purchased prior to this time):
    \[
      f(g_i) = z \cdot t_2 + \ell \cdot t_2 + \alpha \cdot (t_1 + t_2) + g_i
      \cdot t_1.
    \]
    Indeed, each $Z$-voter earned $t_2$ since time $t_1$, each set
    voter who got an approval for $p$ also earned $t_2$ since
    time~$t_1$, each $A$-voter earned $t_1+t_2$ since the beginning of
    the iteration, and each of the $g_i$ set voters who got an
    approval for $p$ and who corresponds to a set including~$u_i$ did
    not spend its $t_1$ amount of money at time $t_1$. We observe
    that:
    \[
       f(1) < 1 \leq f(2).
    \]
    Hence, if $g_i = 1$ then $\phragmen$ selects $d_i$ at time
    $t_1+t_2$ since the beginning of the iteration, and we start
    iteration $i+1$ (with each $Z$-voter having $t_2$ money and each
    $Y$ and each set voter having zero money).  Yet, if $g_i \geq 2$,
    then $\phragmen$ selects $p$ (possibly even earlier than time
    $t_1 + t_2$, especially if $g_i > 2$).    
  \end{description}
  All in all, it is possible to add $\ell$ approvals for $p$ to the
  set voters without getting $p$ to be funded if and only if our input
  $\rxthreec$ instance has an exact cover.

  \paragraph{Adding $\textbf{p}$-Approvals Beyond Set Voters.}
  Next, we explain why it suffices to focus on adding $p$-approvals to
  the set voters.  We make the following observations:
  \begin{enumerate}
  \item If we add a $p$-approval to even a single $W$-voter (and the
    remaining $\ell-1$ approvals to whatever other voters) , then at
    latest at time $t_1+t_2$ since the beginning of the second
    iteration, this voter would have accumulated sufficient amount of
    money that $p$ would be selected instead of $d_2$.
  \item All $Z$-voters already approve $p$, so it is impossible to add
    further $p$-approvals to them.
  \item If there is a solution where we add $p$-approvals to some
    $Y$-voters (who are not $A$-voters, as those already approve $p$),
    then $p$ also is not funded if we add the approvals to set voters
    instead. Indeed, the $Y$-voters have at least as much money as the
    set voters at each point of time during each iteration.
  \end{enumerate}
  This justifies why it suffices to consider adding approvals for $p$
  to the set voters only. However, regarding the $Y$-voters we can say
  something stronger: If we add an approval to some $Y$-voter $y_j$,
  then in each iteration $i$, when considering $d_i$ vs $p$, voter
  $y_j$ would contribute $t_1+t_2$ to buying $p$ (since $y_j$ does not
  pay for $u_i$), whereas any $S$-voter $s_k$ contributes $t_1+t_2$
  only if $u_i$ belongs to the corresponding set (otherwise $s_k$
  contributes to $u_i$ and is left with $t_2 < t_1+t_2$ budget for
  $p$).  Thus, in iteration $i$ at time $t_1+t_2$ $p$-supporters would
  have budget at least:
  \[
    z \cdot t_2 + \ell \cdot t_2 + \alpha \cdot (t_1 + t_2) + g_i \cdot
    t_1 + y_p \cdot t_1,
  \]
  where $y_p \geq 1$ is the number of $Y$-voters that got approvals
  for $p$ ($g_i$ was defined when analyzing the second part of the
  iteration).  It is not hard to see that $y_p \geq 2$ results in
  selecting $p$ before $d_1$ and $y_p = 1$ results in selecting $p$
  before $d_h$, where $h$ is the index of element $u_h$ whose one of
  the voters corresponding to sets containing $u_h$ received an
  approval towards $p$.  Therefore, adding a $p$-approval to any of
  the $Y$-voters (who is not a $A$-voter) results in selecting $p$.

  Consequently, if there is a solution for our problem, then it
  consists of adding $p$-approvals to set voters only. This means that
  the number of solutions for our problem is equal to the number of
  exact covers in the input $\rxthreec$ instance.

  \paragraph{Calculations.}
  To complete the proof, we establish that the (in)equalities that we
  assered actually hold.
  \begin{enumerate}
  \item $1 = (y + s \cdot n + w) \cdot \nicefrac{t_2}{2}$.  This
    equation asserts that after time $\nicefrac{t_2}{2}$, candidate
    $b_1$ can be bought (and later analogously candidate $b_2$).  This
    is true because
    $(y + s \cdot n + w) \cdot \nicefrac{t_2}{2} =
    (\frac{4n^{10}-3n^9}{18} + \nicefrac{n^9}{6} \cdot n +
    \frac{11n^{10}+3n^9}{18}) \cdot \nicefrac{1}{n^{10}} =
    \frac{4n^{10}-3n^9+3n^{10}+11n^{10}+3n^9}{18n^{10}} =
    \frac{18n^{10}}{18n^{10}} = 1$.
  \item $1 = z \cdot (t_1 + t_2) + s \cdot (n-3) \cdot t_1$.  This
    equation asserts that $U$-candidate $u_i$ can be selected after
    time $t_1$ provided that $Z$-voters had initially budget $t_2$.
    One can verify that
    $\frac{1 - s \cdot (n-3) \cdot t_1}{t_1 + t_2} = \frac{n^{10} -
      \nicefrac{n^9}{6} \cdot (n-3) \cdot 1}{1 + 2} =
    \frac{5n^{10}+3n^9}{18} = z$.
  \item
    $1 = y \cdot (t_1 + t_2) + s \cdot n \cdot t_2 + 3 \cdot s \cdot
    t_1$.  This equation asserts that $D$-candidate $d_i$ can be
    selected after time $t_1+t+2$ from the beginning of an iteration
    provided that formerly $U$-candidate $u_i$ was selected at time
    $t_1$.  One can verify that
    $\frac{1 - s \cdot n \cdot t_2 + 3 \cdot s \cdot t_1}{t_1 + t_2} =
    \frac{n^{10} - \nicefrac{n^9}{6} \cdot n \cdot 2 -
      \nicefrac{n^9}{6} \cdot 3 \cdot 1}{1 + 2} =
    \frac{4n^{10}-3n^9}{18} = y$.  It also implies that
    $1 - z \cdot (t_1 + t_2) = s \cdot (n-3) \cdot t_1$
  \item
    $z \cdot t_2 + \ell \cdot t_2 + \alpha \cdot (t_1 + t_2) + t_1 < 1
    \leq (z \cdot t_2 + \ell \cdot t_2 + \alpha \cdot (t_1 + t_2) +
    t_1) + t_1$.  This inequality asserts for each $U$-candidate $u_i$
    that if we add approval to one voter corresponding to a set
    containing $u_i$, then $p$ still loses with $d_i$ at time $t_2$,
    but if we add approvals to at least two voters corresponding to
    sets containing $u_i$, then $p$ will be selected (due to
    tie-breaking at $t_1+t_2$ for two or sooner for at least three).
    Please note that it is equivalent to saying that
    $\alpha \in [\frac{1 - (z \cdot t_2 + \ell \cdot t_2 +
      2t_1)}{t_1+t_2}, \frac{1 - (z \cdot t_2 + \ell \cdot t_2 +
      t_1)}{t_1+t_2})$.  Since
    $\alpha = \frac{4n^{10}-3n^9-6n-18}{27} =
    \frac{9n^{10}-5n^{10}-3n^9-6n-18}{27} =
    \frac{n^{10}-\frac{5n^{10}+3n^9}{18} \cdot 2 - \nicefrac{n}{3} -
      2}{1 + 2} = \frac{1 - z \cdot t_2 - \nicefrac{n}{3} \cdot t_2 -
      2 \cdot t_1}{t_1 + t_2}$, $\alpha$ lies in this range.
  \item $y \cdot t_1 + s \cdot n \cdot t_1 < 1$.  This inequality
    guarantees at in iteration $i$ at time $t_1$, each yet-unbought
    $D$-candidate will have too low budget of its supporters to be
    bought so we will prefer to purchase $u_i$.  This also holds as
    $y \cdot y_1 + s \cdot n \cdot t_1 = \frac{\frac{4n^{10}-3n^9}{18}
      + \nicefrac{n^9}{6} \cdot n}{n^{10}} =
    \frac{4n^{10}-3n^9+3n^{10}}{18n^{10}} =
    \frac{7n^{10}-3n^9}{18n^{10}} < 1$.
  \item $z \cdot (t_1 + t_2) + \ell \cdot t_1 + \alpha \cdot t_1 < 1$.
    This inequality assures that in iteration $i$ at time $t_1$, $p$
    will have too low budget of its supporters to be bought so we will
    prefer to purchase $u_i$.  Let us observe that
    $z \cdot (t_1 + t_2) + \nicefrac{n}{3} \cdot t_1 + \alpha \cdot
    t_1 < 1 \iff \nicefrac{n}{3} \cdot t_1 + \alpha \cdot t_1 < 1 - z
    \cdot (t_1 + t_2) \iff \nicefrac{n}{3} \cdot t_1 + \alpha \cdot
    t_1 < s \cdot (n-3) \cdot t_1 \iff \nicefrac{n}{3} + \alpha < s
    \cdot (n-3) \iff s \cdot (n-3) - \nicefrac{n}{3} - \alpha > 0$.
    Since
    $s \cdot (n-3) - \nicefrac{n}{3} - \alpha = \nicefrac{n^9}{6}
    \cdot (n-3) - \nicefrac{n}{3} - \frac{4n^{10}-3n^9-6n-18}{27} =
    \frac{9n^{10}-27n^9-18n-8n^{10}+6n^9+12n+36}{54} =
    \frac{n^{10}-21n^9-6n+36}{54} = \frac{(n-22) \cdot n^9 + (n^8-6)
      \cdot n +36}{54} > 0$ for $n \leq 60$, this inequality also
    holds.
  \item
    $z \cdot t_2 + s \cdot (n-3) \cdot t_2 + 3 \cdot s \cdot t_1 < 1$.
    This inequality asserts that in iteration $i$ at time $t_1+t_2$,
    each yet-unbought $U$-candidate will have too low budget of its
    supporters to be bought so we will purchase $d_i$ or $p$.  If we
    place the actual values in the left side, we obtain that
    $z \cdot t_2 + s \cdot (n-3) \cdot t_2 + 3 \cdot s \cdot t_1 =
    \frac{\frac{5n^{10}+3n^9}{18} \cdot 2 + \nicefrac{n^9}{6} \cdot
      (n-3) \cdot 2 + 3 \cdot \nicefrac{n^9}{6} \cdot 1}{n^{10}} =
    \frac{10n^{10}+6n^9 + 6n^{10}-18n^9 + 9n^9}{18n^{10}} =
    \frac{16n^{10}-3n^9}{18n^10} < 1$.
  \end{enumerate}
  This completes the proof.
\end{proof}

\thmFPT*
\begin{proof}
    We start with \phragmen{} and later show how the algorithm can be tweaked to work also for \equalshares{}.  Let $E = (P,V,B)$ be a PB  instance with losing project $p$. We will show how to compute the largest number of approvals whose addition \emph{does not} lead to funding $p$. The value of $\pessimistadd_E(p)$ is one larger. Let $k = |\phragmen(E)|$ be the number of funded projects or, equivalently, the number of rounds performed by the rule. We assume that $p$ is approved by at least one voter in $V$.

    For each round $i$, let $m_i$ be the difference between $\cost(p)$ and the amount of funds that voters approving $p$ have in round~$i$. Our goal is to select the largest group of voters who do not approve $p$ and whose total funds in each round $i$ are at most~$m_i$. 
  
    For each voter $v_i \in V \setminus A(p)$, we define its \emph{balance vector} $(b^i_1,\ldots, b^i_k)$, which contains the balance of the voter's bank account right before each round. We partition the voters not approving $p$ into \emph{voter-types} $T=\{T_1,\ldots,T_t\}$, where each type consists of voters with identical balance vectors.

    To solve the problem, we form an integer linear program (ILP). For every voter-type $T_i$ we have a nonnegative integer variable $x_{T_i}$ that represents the number of voters of this type that will additionally approve~$p$. For each round $i$ we form the following \emph{round-constraint} (using strict inequality, if needed due to tie-breaking):
    \[
        \textstyle \sum_{j\in\{1, \ldots, t\}} x_{T_j}\cdot b^{T_j}_i \leq m_i.
    \]
    The objective function is to maximize the sum of all the $x_{T_i}$ variables. Our algorithm outputs the value of the objective function plus one.

    Finally, we note that there are at most $O(2^k)$ voter types (indeed, each voter type corresponds to a $k$-dimensional $0$/$1$ vector, which has $1$ in position $i$ if a voter approves---and, hence, pays for---the candidate selected in round $i$ or not). Thus, the number of variables in our ILP is $O(2^k)$ and we can solve it using the classic algorithm of \citet{len:j:integer-fixed} in $\fpt$ time with respect to $k$.

    For $\equalshares$, the algorithm is almost the same. We just extend the preprocessing by computing values $q_1,\ldots,q_k$ that stand for the value of $q(c_i)$ used to fund project~$c_i$ in round $i\in[k]$ and we additionally extend the definition of balance vector by the value $b_{k+1}^i$. In the ILP, we then replace each round-constraint for each round $i\in[k]$ with a constraint:
    \[
            \sum_{j\in[t]} x_{T_j}\cdot\min\left(b^{T_j}_i,q_i\cdot\cost(p)\right)\leq m_i,
    \]
  which ensures that $p$ is not funded instead of $c_i$ in round~$i$ (again, assuming $p$ is last in the tie-breaking order). 
  The only missing component is to secure that $p$ is not selected as the $(k+1)$-th funded project (as $p$ had at least one approval to begin with, such a situation was impossible under $\phragmen$)
  To prevent this, we add a final constraint
  \[
    \sum_{j\in[t]} x_{T_j}\cdot b^{T_j}_{k+1}\leq m_{k+1}.
  \]
  The rest of
  the arguments and the running time remain the same as for
  \phragmen.
\end{proof}

\begin{figure*}[t!]
    \centering 
\begin{subfigure}{0.24\textwidth}
\includegraphics[width=\textwidth]{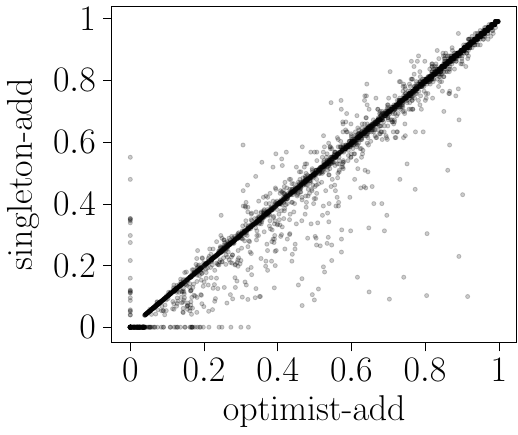}
  \caption{}\label{fig:ov_corrMES2a}
\end{subfigure} \hfill
\begin{subfigure}{0.24\textwidth}
\includegraphics[width=\textwidth]{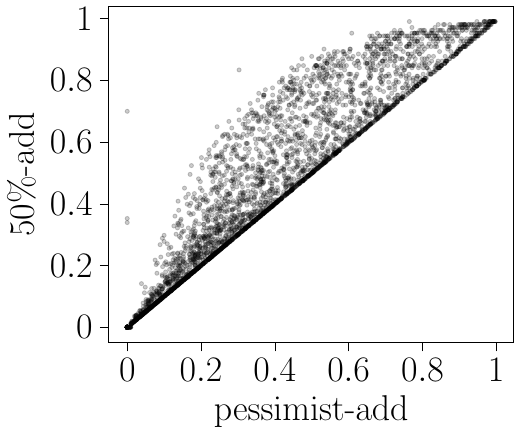}
  \caption{}\label{fig:ov_corrMES2c}
\end{subfigure}\hfill
\begin{subfigure}{0.24\textwidth}
\includegraphics[width=\textwidth]{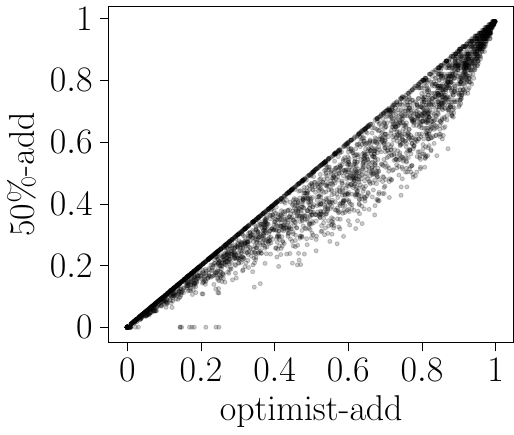}
  \caption{}\label{fig:ov_corrMES2d}
\end{subfigure}\hfill
\begin{subfigure}{0.24\textwidth}
\includegraphics[width=\textwidth]{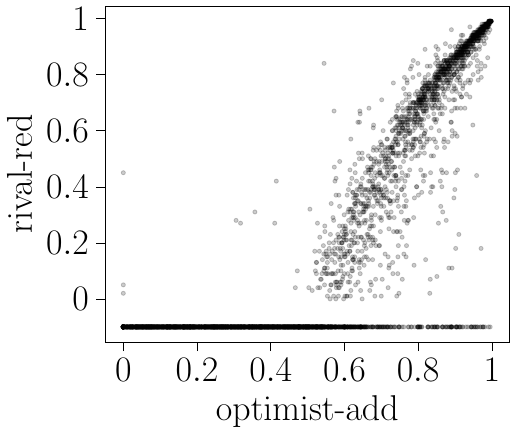}
  \caption{}\label{fig:ov_corrMES2e}
\end{subfigure}
\caption{Further correlation plots between our measures where each point is one project for $\eqphragmen$ (continues \Cref{fig:corr}). Measures are normalized so that 1 denotes no change and 0 denotes a maximal-size change. In \Cref{fig:ov_corrMES2e}, a negative value of $\rivalred$ means that for all considered values of $\ell$ removing rivalry approvals from $\ell$ supporters selected uniformly at random was not sufficient for a $50\%$-funding probability.} 
\label{fig:ov_corrMES2}
\end{figure*} 
\subsection{Adding-Singletons Measure}

\propRandomAddPoly*
\begin{proof}
    In this proof, we assume that the $\singletonadd_E(p)$ measure is well-defined, which can be easily checked in constant time before we run our algorithm; see \Cref{sec:undefined}.
	
    The rough idea for all the rules is very the same. We simulate the execution of a respective PB rule and, in each round, we compute the minimum number of voters approving $p$ only that need to be added to the instance such that~$p$ is funded in this round. The value of $\singletonadd_E(p)$ is then the minimum value over all rounds.
	
    We start with the $\av$ rule. The algorithm is very the same as in \Cref{prop:optimitstadd:poly}. We run the rule on the original instance as usual. The interesting part of the process takes place once the funding of the currently processed project $c$ first decreases the value of the remaining budget below $\cost(p)$. Before funding $c$, we compute $m^*$ as $|A(c)| - |A(p)| + [c \succ p]$, where $[ c \succ p ]$ evaluates to $1$ if $c$ precedes $p$ in tie-breaking order and to $0$ otherwise. The algorithm then outputs the value of $m^*$. For correctness, suppose that $\singletonadd_E^\av(p) = m' < m^*$. If we add $m'$ additional voters approving $p$, then $p$ is assumed by the rule in some round after $c$ is funded. However, $c$ was selected so that, after funding $c$, the remaining budget is below $\cost(p)$. Consequently, adding $m'$ additional voters is not sufficient to fund $p$ and therefore $\singletonadd_E^\av(p) \geq m^*$. Since, by the definition of the rule, the distinguished project $p$ is funded after adding $m^*$ approvals, the algorithm is correct and clearly performs only $O(1)$ additional operations per round.
    
    Next, let the rule be $\ph$. We again simulate the rule and, this time, in every round~$i$, we compute the minimum number of voters who approve solely $p$ that we need to add to the instance to make~$p$ funded in round~$i$. Let $c$ be a project originally funded in round $i$, $b_{i}$ be the budget of a virtual voter approving only~$p$ in this round, and $e_i^p$ be the sum of endowments of all voters who approve $p$ just before round $i$ in the original instance. To make $p$ funded, we need to ensure that $m_i\cdot b_{i} + e_i^p \geq \cost(p)$, where $m_i\in\mathbb{N}$ is the number of added voters. If~${m_i\cdot b_i + e_i^p = \cost(p)}$, the tie-breaking prefers $c$ over $p$, and there exists at least one voter who approves both $p$ and $c$, we need to make the inequality strict.
    We stop the execution once the remaining budget is lower than $\cost(p)$ and the algorithm outputs the minimum~$m_i$ over all rounds.
    
    To conclude, observe that the computation of $m_i$ is $O(1)$ operation along the standard computation of the rule. Moreover, we can create one ``global`` variable $m^*$ to store the minimum $m_i$. The variable is originally initiated to $\infty$, and in each round we conditionally update it if we face a smaller value than the one currently stored. This also requires a constant number of additional operations per round, and the theorem follows. 
\end{proof}

\subsection{Rivalry-Reduction Measure}
\rivalredproof*
\begin{proof}
    Membership to $\np$ is clear for all three rules: given a set of voters $S\subseteq V$ to be adjusted as a certificate, we can verify that this is indeed a solution by checking whether all of them approve $p$ and by setting $A(v) = p$ for every $v\in S$ and running the rule on the modified instance. Since all the rules run in polynomial-time, the membership holds.

    To prove $\np$-hardness under the $\av$ rule, we reduce from the $\setCover$ problem, which is known to be $\np$-complete~\cite{Karp1972}. 
    Let $\mathcal{I}=(U, S, k)$ be an instance of $\setCover$, where $U$ is a set of elements (universe), $S$ is a collection of sets of elements from $U$, and $k$ is an integer. 
    Our goal is to decide whether there exist $k$-sized set $S'\subseteq S$ such that each element of $U$ appears in at least one of the selected sets.

    We construct an equivalent instance $\mathcal{J}$ of our problem as follows. 
    We create one \emph{universe-project} for every element $u\in U$ and add one distinguished project~$p$. All the projects have cost $1$, and we have $|P|=|U|+1$. 
    For each set $S_j \in S$, we create one \emph{set-voter} approving $p$ and candidates corresponding to elements of $S_j$. 
    Furthermore, for each project $u_i \in U$, we create $|S|-|S(u_i)|+1$, where $S(u_i)$ is the set of sets containing $u_i$ as an element, \emph{dummy-voters} who approve only $u_i$. That is, each universe-project is approved by $|S|+1$ voters and $p$ is approved by exactly $|S|$ voters.
    To complete the construction, we set the budget $B=1$, the number of votes to change $\ell = k$, and the tie-breaking order is $p \succ u_1 \succ u_2 \succ \ldots \succ u_{|U|}$. 
    Observe that exactly one project will be funded, as all project are unit-cost and the budget is $B=1$.

    For the correctness, let $\mathcal{I}$ be a yes-instance and $S'\subseteq S$ be a set cover of size~$k$. We remove rivalry approvals of all set-voters corresponding to the sets in $S'$. Since $S'$ is a set cover, every universe project loses at least one approval. Therefore, each universe project is approved by at most $|S|$ voters and so is the distinguished candidate~$p$. Therefore, due to our tie-breaking order, in the modified instance $p$ will be selected as the only funded candidate and, thus, voters corresponding to elements of $S'$ are a solution for $\mathcal{J}$.

\begin{figure*}[t!]
    \centering 
\begin{subfigure}{0.32\textwidth}
\includegraphics[width=\textwidth]{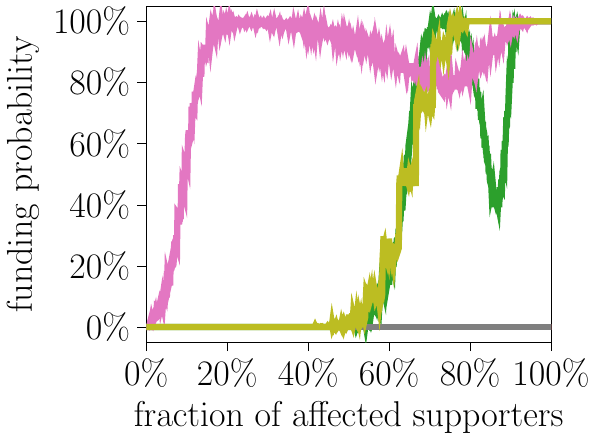}
  \caption{Lodz 2022 (Dolina Lodki)}
\end{subfigure} \hfill
\begin{subfigure}{0.32\textwidth}
\includegraphics[width=\textwidth]{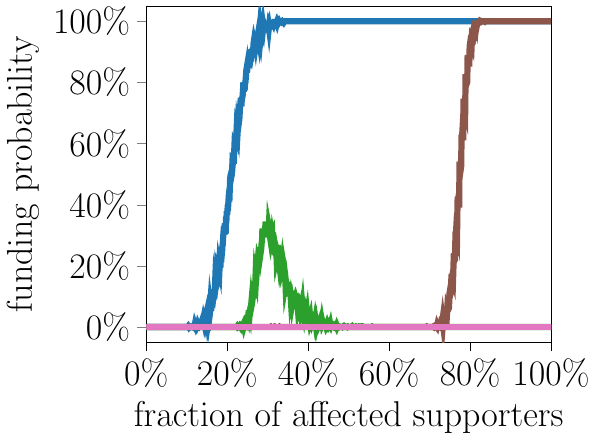}
  \caption{Lodz 2022 (Stare Polesie)}
\end{subfigure} \hfill
\begin{subfigure}{0.32\textwidth}
\includegraphics[width=\textwidth]{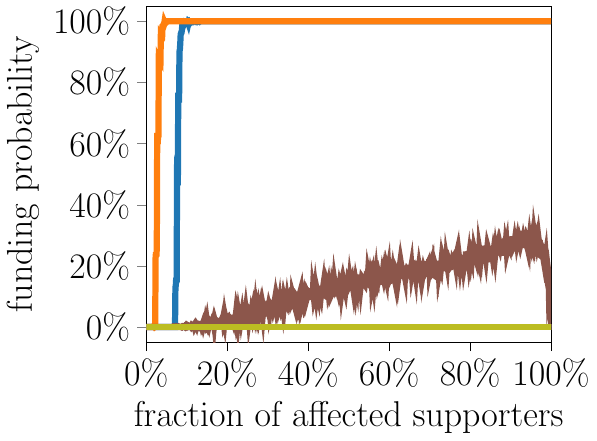}
  \caption{Warszawa 2018 (Nowolipki Powazki)}\label{fig:rv:d_appendix}
\end{subfigure}
\caption{Line plots showing how the funding probability of a project develops if we remove rivalry approvals from 
  its supporters selected uniformly at random for $\eqphragmen$ (continues \Cref{fig:rvex}).} 
\label{fig:rv}
\end{figure*} 

\begin{figure}[t!]
    \centering 
\begin{subfigure}{0.32\textwidth}
\centering
\includegraphics[width=\textwidth]{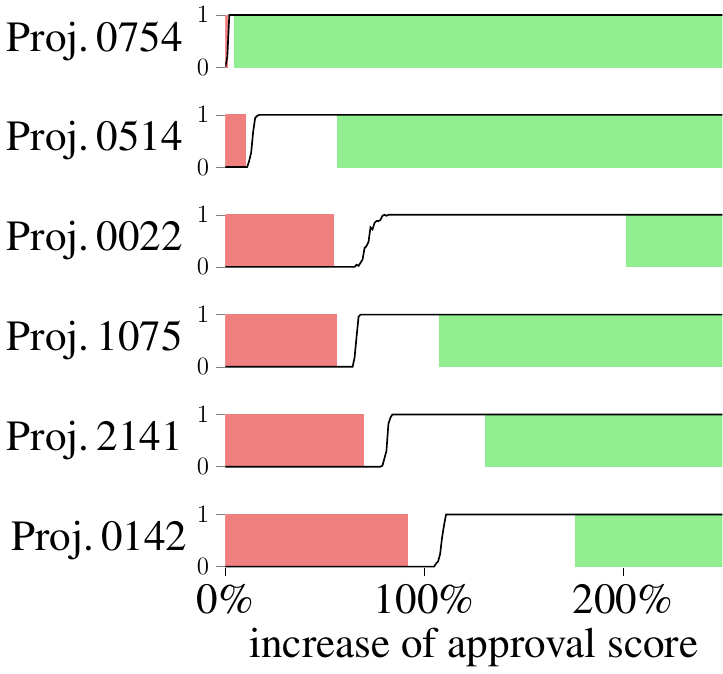}
  \caption{Warszawa 2018 (Wilanow)}
\end{subfigure}\vspace*{0.5cm}
\begin{subfigure}{0.35\textwidth}
\centering
\hspace*{0.2cm}\includegraphics[width=\textwidth]{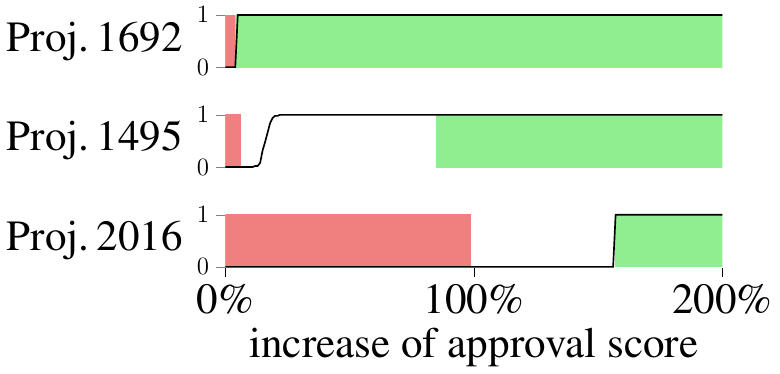}
  \caption{Warszawa 2019 (Ochota)}
\end{subfigure}
\caption{Line plots showing how the funding probability of a project develops from $0$ to $1$ when increasing its approval score by adding approvals uniformly at random to existing voters for $\eqphragmen$ (continues \Cref{fig:50}). The red area goes until the $\optimistadd$ value and the green area extends from the $\pessimistadd$ value.} 
\label{fig:rvA}
\end{figure} 

    In the opposite direction, let $\mathcal{J}$ be a yes-instance and $V'\subseteq V\cap A(p)$ be an $\ell$-sized subset of voters such that setting $A(v)=\{p\}$ for every $v\in V'$ leads to the funding of $p$. Clearly, the set $V'$ consists exclusively of set-voters, as no dummy-voter approves $p$. Since by removing rivalry approvals of voters in $V'$ the distinguished candidate become funded, every universe-project loses at least one approval. In other words, the removed approvals contain each universe-project at least once; setting $S'$ to be the set of all sets in $S$ corresponding to set-voters in $V'$, we obtain $|S'|=|V'|=\ell=k$ and that $S'$ is a set cover in $\mathcal{I}$ by the previous argumentation.

    The reduction can be clearly done in polynomial time and the $\np$-completeness for $\av$ follows. Surprisingly, we can also use the same construction to prove $\np$-completeness for $\ph$ and $\eq$. Observe that there is only one round and all the rules select the most approved candidate as the single winner. Hence, the problem is $\np$-complete for all rules of our interest.
\end{proof}

\section{Additional Material for \Cref{sec:pract}}\label{app:pract}

\subsection{Additional Plots for $\eqphragmen$} \label{app:pract-EQ}

\Cref{fig:ov_corrMES2} shows additional correlation plots for pairs of our measures. 
\Cref{fig:rv} includes further examples of instances where projects show a non-monotonic behavior when removing rivalry approvals from project's supporters uniformly at random. 
Lastly, in \Cref{fig:rvA}, we give some further examples of instances where some projects have a substantial gap between the $\optimistadd$ and $\pessimistadd$ values yet their funding probabilities perform the typical jumps  when adding approvals to existing voters uniformly at random.

\subsection{Analysis for $\phragmen$}\label{app:pract-PH}

In this section, we describe the results of our experimental analysis for $\phragmen$.
The general picture is very similar as for $\eqphragmen$ and our presented figures are created analogous to the ones for $\phragmen$. 
In \Cref{tab:PCC_appendix}, we show the Pearson Correlation Coefficient (PCC) between our measures and in \Cref{fig:ov_corr_app} correlation plots for some of them.
The PCC values for $\phragmen$ are very similar to the ones for $\eqphragmen$, typically differing by at most $0.02$. 
The only larger differences appear for $\rivalred$, which has for $\phragmen$ a higher correlation to the other measures than for $\eqphragmen$. 
(Note that for $\phragmen$ $1484$ out of the $3513$ projects have a funding probability above $50\%$ for some considered value of $\ell$ when removing rivalry approvals from $\ell$ of its supporters uniformly at random). 
Also, the correlation plots for $\phragmen$ look very similar to the ones for $\equalshares$. 
The most significant difference here is probably in \Cref{fig:corr-a} which compares $\singletonadd$ and $\optimistadd$, which is due to the fact that as discussed in the main body for $\phragmen$ $\singletonadd$ constitutes a lower bound for $\optimistadd$, which is not the case for $\eqphragmen$.

Regarding $\randomadd$, we again observe that the measure is slightly closer to $\optimistadd$ than to $\pessimistadd$. 
The average (resp., maximum) difference between $\randomadd$ and  $\optimistadd$ is $0.07$ (resp., $0.3$), whereas it is $0.1$ (resp., $0.4$) for the $\randomadd$ measure and $\pessimistadd$. 
Again we find that even in case there is a gap between the $\optimistadd$ and $\pessimistadd$ value of a project, project's funding probability quickly transitions from an almost $0\%$ to a close-to $100\%$ when adding approvals to existing voters uniformly at random. 
\Cref{fig:50_app} shows the behavior of several instances in support of this claim.

Regarding $\rivalred$, for $\phragmen$ there are more projects whose funding probability behaves non-monotonically when removing rivalry approvals than for $\eqphragmen$. 
Again we present in \Cref{fig:rv_app} some cherry-picked instances where the funding probabilities of some projects show a particularly interesting behavior.

\begin{table}[t]
	\centering
\setlength{\tabcolsep}{3pt}
 \resizebox{0.55\textwidth}{!}{\begin{tabular}{lcccccc}
\toprule
{} &  optimist &  pessimistic &  50\% &  singleton &  rival &  cost  \\
\midrule
optimist      &                   $-$ &                   \corr{0.89} &          \corr{0.98}  &             \corr{0.98} &                       \corr{0.85} &         \corr{0.74}  \\
pessimistic      &                   \corr{0.89} &                   $-$ &          \corr{0.96} &            \corr{0.88}&                       \corr{0.76} &         \corr{0.79}  \\
50\%             &                   \corr{0.98} &                   \corr{0.96} &          $-$ &             \corr{0.96}&                       \corr{0.85} &         \corr{0.79}  \\
singleton          &                  \corr{0.98} &                   \corr{0.88} &          \corr{0.96} &             $-$ &                       \corr{0.93} &         \corr{0.78}  \\
rival &                   \corr{0.85} &                  \corr{0.76} &          \corr{0.85} &             \corr{0.93} &                       $-$ &         \corr{0.70}  \\
cost               &                   \corr{0.74} &                   \corr{0.79} &          \corr{0.79} &             \corr{0.78} &                       \corr{0.70} &         $-$  \\
\bottomrule
\end{tabular}}
\caption{Pearson Correlation Coefficient between measures for $\phragmen$.} 
\label{tab:PCC_appendix}
\end{table}

\begin{figure*}[t!]
    \centering 
\begin{subfigure}{0.32\textwidth}
\includegraphics[width=\textwidth]{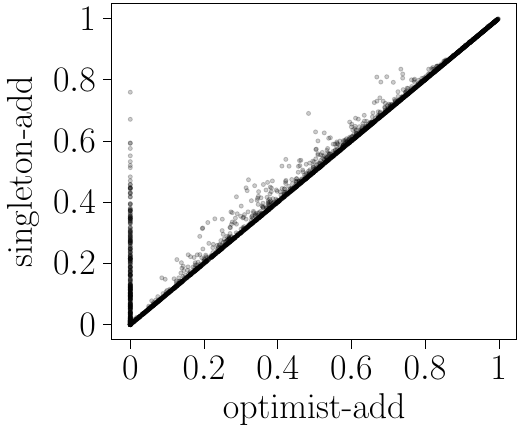}
  \caption{}\label{fig:corr-a}
\end{subfigure} \hfill
\begin{subfigure}{0.32\textwidth}
\includegraphics[width=\textwidth]{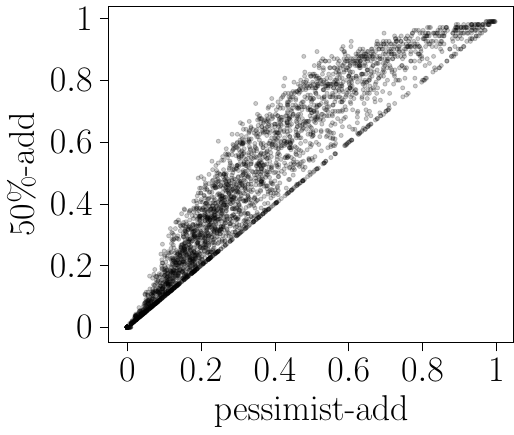}
  \caption{}\label{fig:c}
\end{subfigure} \hfill
\begin{subfigure}{0.32\textwidth}
\includegraphics[width=\textwidth]{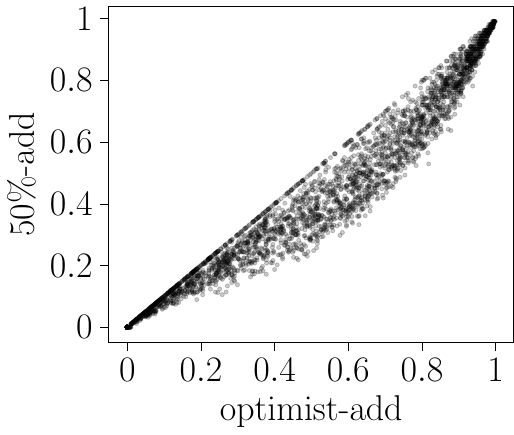}
  \caption{}\label{fig:d}
\end{subfigure}\hfill
\begin{subfigure}{0.32\textwidth}
\includegraphics[width=\textwidth]{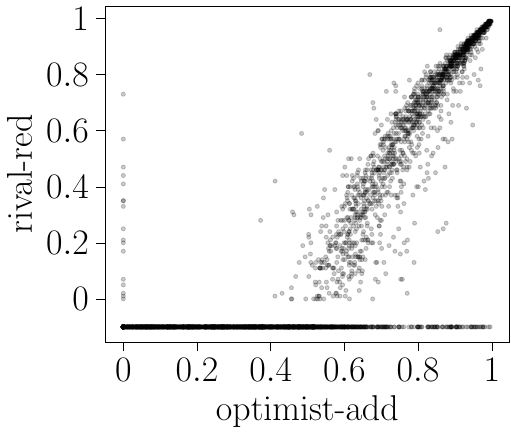}
  \caption{}\label{fig:e}
\end{subfigure}\hfill
\begin{subfigure}{0.32\textwidth}
\includegraphics[width=\textwidth]{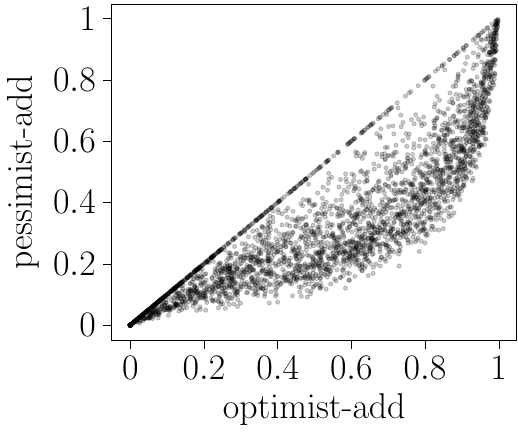}
  \caption{}\label{fig:b}
\end{subfigure}\hfill
\begin{subfigure}{0.32\textwidth}
\includegraphics[width=\textwidth]{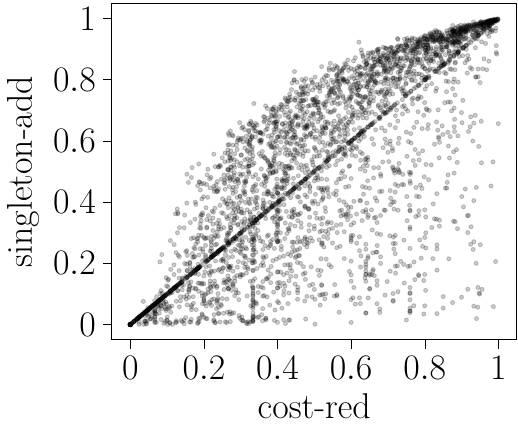}
  \caption{}\label{fig:ov_corr_appf}
\end{subfigure}
\caption{Correlation plots where each point is one project for $\phragmen$. A negative value of $\rivalred$ means that for all considered values of $\ell$ removing rivalry approvals from $\ell$ supporters selected uniformly at random was not sufficient for a $50\%$-funding probability.} 
\label{fig:ov_corr_app}
\end{figure*} 

\begin{figure*}[t!]
    \centering 
    \begin{subfigure}{0.32\textwidth}
\includegraphics[width=\textwidth]{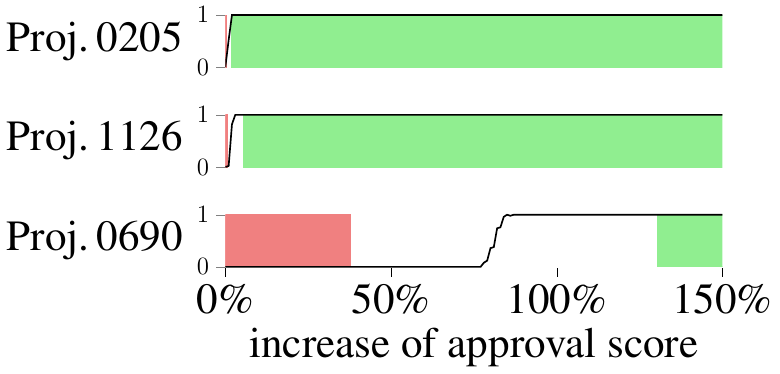}
  \caption{Warszawa 2018 (Lotnisko Bemowo Lotnisko Fort Bema)}\label{fig:rv:e}
\end{subfigure}\hfill
\begin{subfigure}{0.32\textwidth}
\includegraphics[width=\textwidth]{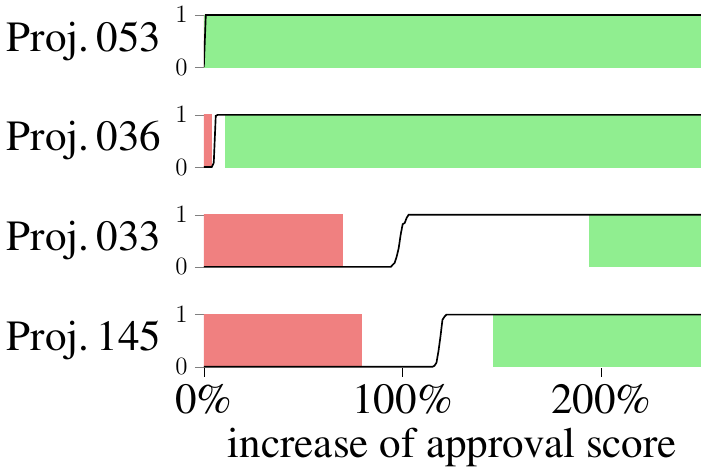}
  \caption{Lodz 2022 (Zlotno)}
\end{subfigure} \hfill
\begin{subfigure}{0.32\textwidth}
\includegraphics[width=\textwidth]{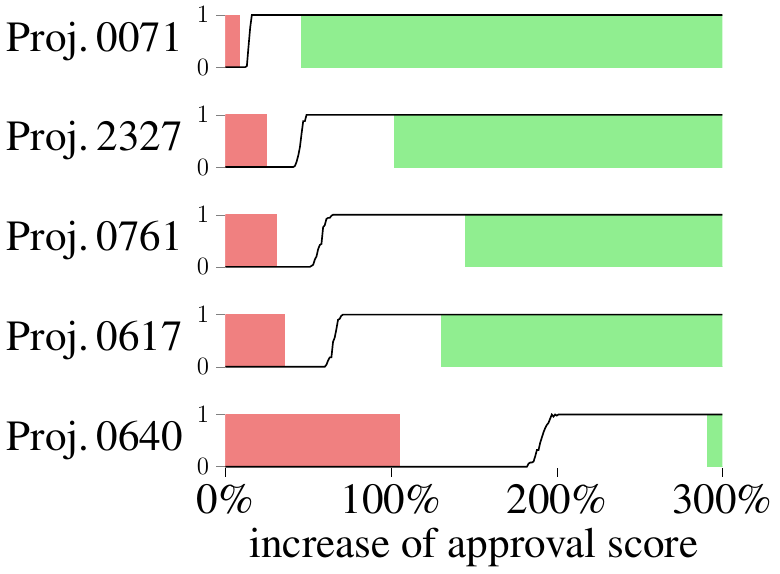}
  \caption{Warszawa 2017 (Szczesliwice)}
\end{subfigure} \hfill
\begin{subfigure}{0.32\textwidth}
\includegraphics[width=\textwidth]{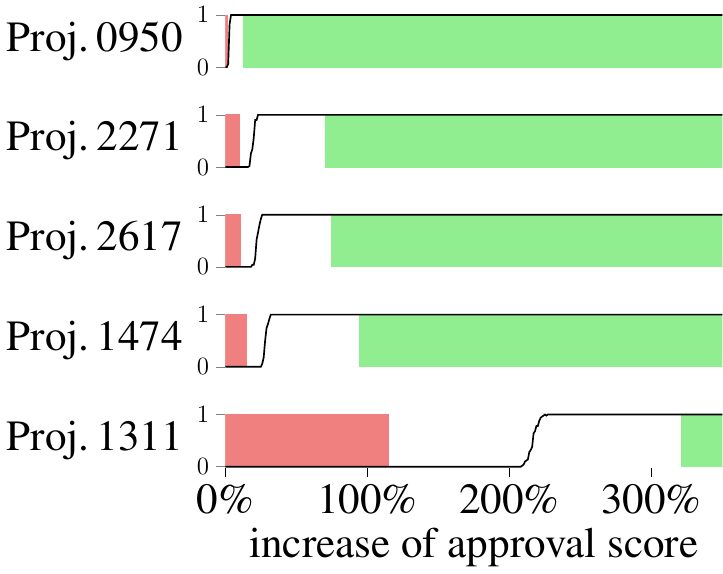}
  \caption{Warszawa 2017 (Wars)}
\end{subfigure}\hfill
\begin{subfigure}{0.32\textwidth}
\includegraphics[width=\textwidth]{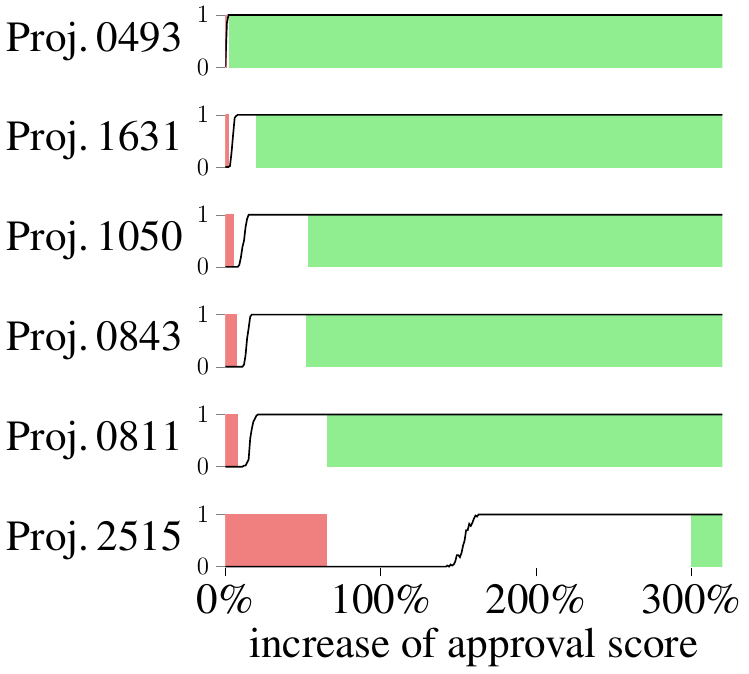}
  \caption{Warszawa 2018 (Brodno)}
\end{subfigure}\hfill
\begin{subfigure}{0.32\textwidth}
\includegraphics[width=\textwidth]{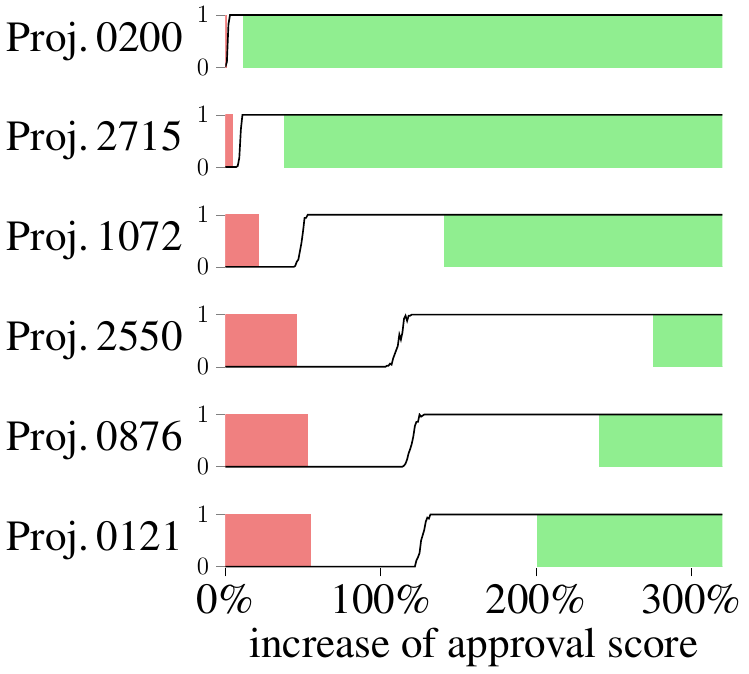}
  \caption{Warszawa 2018 (Goclaw)}\end{subfigure}
\caption{For $\phragmen$, line plots showing how the funding probability of projects develops from $0$ to $1$ when increasing their approval scores by adding approvals uniformly at random to existing voters. The red area goes until the $\optimistadd$ value and the green area extends from the $\pessimistadd$ value.} 
\label{fig:50_app}
\end{figure*} 

\begin{figure*}[t!]
    \centering 
\begin{subfigure}{0.32\textwidth}
\includegraphics[width=\textwidth]{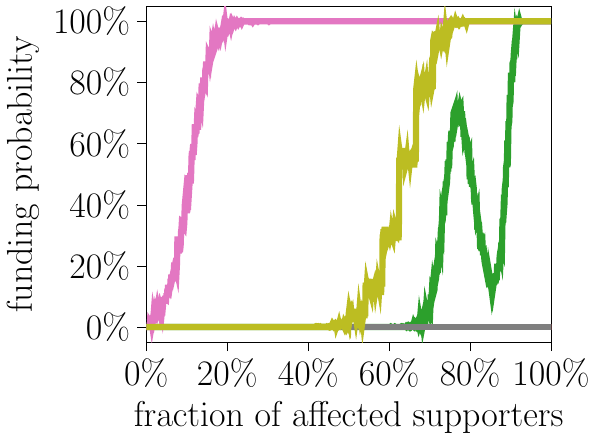}
  \caption{Lodz 2022 (Dolina Lodki)}
\end{subfigure} \hfill
\begin{subfigure}{0.32\textwidth}
\includegraphics[width=\textwidth]{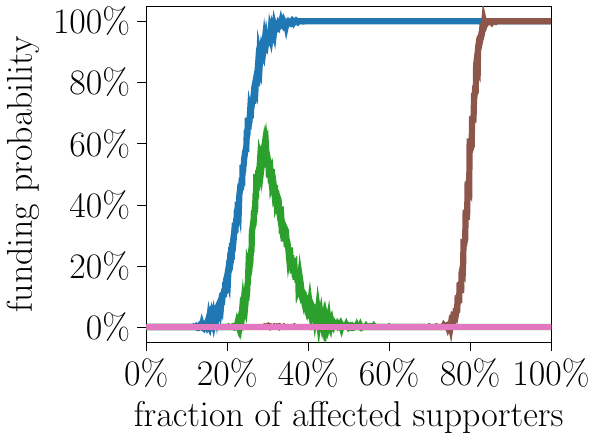}
  \caption{Lodz 2022 (Stare Polesie)}
\end{subfigure} \hfill
\begin{subfigure}{0.32\textwidth}
\includegraphics[width=\textwidth]{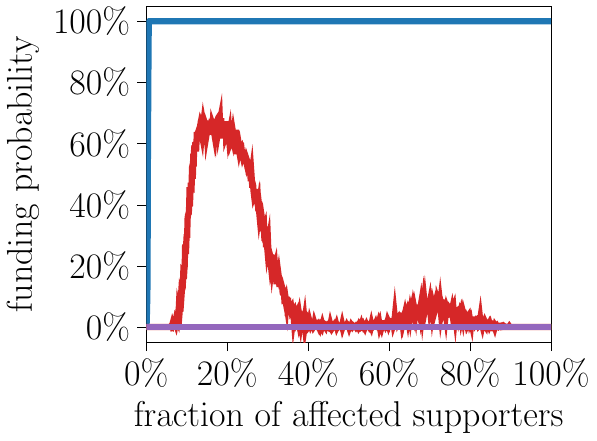}
  \caption{Lodz 2022 (Zlotno)}
\end{subfigure}\hfill
\begin{subfigure}{0.32\textwidth}
\includegraphics[width=\textwidth]{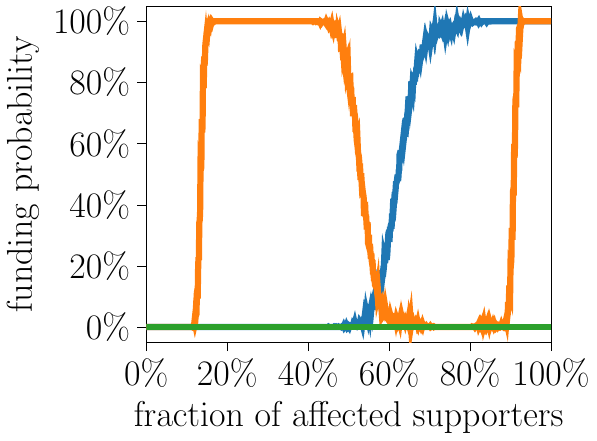}
  \caption{Warszawa 2018 (Brodno Podgrodzie)}
\end{subfigure}\hfill
\begin{subfigure}{0.32\textwidth}
\includegraphics[width=\textwidth]{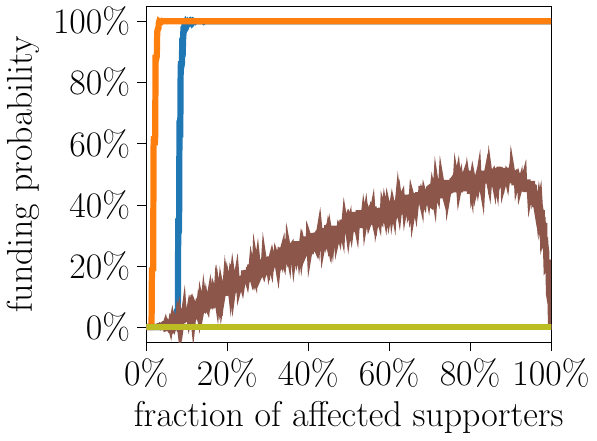}
  \caption{Warszawa 2018 (Nowolipki Powazki)}\end{subfigure}\hfill
\begin{subfigure}{0.32\textwidth}
\includegraphics[width=\textwidth]{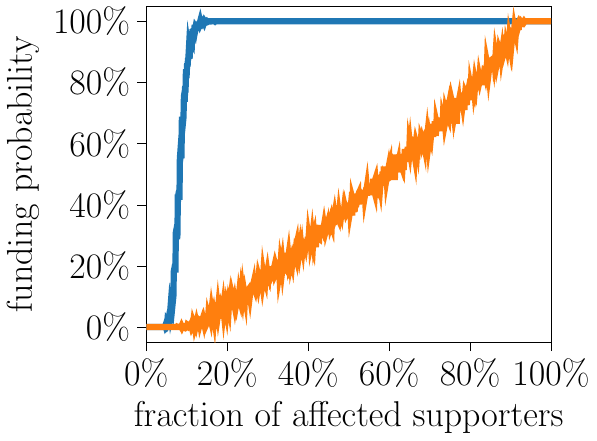}
  \caption{Warszawa 2018 (Miedzeszyn)}
\end{subfigure}
\caption{For $\phragmen$, line plots showing how the funding probability of projects develops if we remove rivalry approvals from subgroups of their supporters selected uniformly at random.} 
\label{fig:rv_app}
\end{figure*} 

\subsection{Analysis for $\greedyav$}\label{app:pract-AV}

We repeated our experiments for $\greedyav$ where $3581$ projects ended up being not funded. 
Notably, here all measures involving adding approvals to existing voters or adding singleton voters coincide, so we only analyze $\costred$, $\singletonadd$, and $\rivalred$. 

\Cref{fig:ov_corrGreedyAV} shows correlation plots and PCC values for these three pairs of measures. 
Regarding the relationship between $\singletonadd$ and $\costred$ (\Cref{fig:ov_corrGreedyAV1}), the connection between the two is much weaker than for the proportional rules. 
In particular, there are many projects with a small value of $\costred$ and varying values of $\singletonadd$. 
This can be explained quite easily by recalling the inner workings of $\greedyav$: When modifying only project's costs, the ordering in which projects are considered remains unchanged. 
Thus, a project can only cost the amount of money left when it is their turn in the original ordering of projects. 
Now, in case most or all of the budget has already been spent before, there is no money for the project left. 
In contrast, modifying the approval score of a project allows us to influence its position in the order in which projects are considered and thereby how much money is left for the project once it is its turn. 
Accordingly, the difference to the proportional rules here is due to the fact that for $\greedyav$, $\costred$ is in some sense of more limited power. 

For $\rivalred$, for $862$ projects for some considered value of $\ell$ removing rivalry approvals from $\ell$ supporters selected uniformly at random was sufficient for a $50\%$ funding probability.
Notably, in contrast to the proportional rules, removing rivalry approvals has a different, arguably weaker effect for $\greedyav$, as it only reduces the approval score of competing projects and not how much of their preallocated money voters have left to spend on the designated project (because there is no preallocated money for $\greedyav$). 
Thus, it is to be expected that $\rivalred$ is less powerful for $\greedyav$ than for the proportional rules.
In fact, the fact that $\rivalred$ has an impact on $862$ projects can be even regarded as a surprisingly high number, highlighting that also for $\greedyav$ adding additional approvals to one's ballot can hurt the funding possibilities of other approved projects. 

Remarkably, for $\greedyav$, there are even more projects than for the proportional rules whose funding probability behaves non-monotonically when removing rivalry approvals and the behavior of these projects comes in more different flavors. 
\Cref{fig:50_app_AV} shows some selected instances. 
One instance that sticks out in particular is \Cref{fig:Goclaw}, where the gray project has an almost $100\%$ funding probability when few of its supporters remove their rivalry approvals; however, in case some more do it as well its funding probability drops again to $0\%$ and even in case gray's supporters don't approve any other projects, the project does not get funded again. 
Nevertheless, for a majority of projects, their funding probability does behave monotonically and quickly jumps from around $0\%$ to around $100\%$ (the instances shown in \Cref{fig:50_app_AV}  are meant to demonstrate the non-monotonic behavior and not to provide representative coverage of the project's behavior). 
Examining the relationship of $\rivalred$ to the other two measures in more detail, we see almost no positive correlation with $\costred$ in \Cref{fig:ov_corrGreedyAV2}. 
In contrasts, the connection to $\singletonadd$ (see \Cref{fig:ov_corrGreedyAV3}) is stronger. 
The correlation (plot) of $\singletonadd$ and $\rivalred$ is quite similar to the respective plots for $\phragmen$ (\Cref{fig:ov_corr_appf}) and $\eqphragmen$ (\Cref{fig:ov_corrMES2e}).  
The above observations indicate that also the power of $\rivalred$ for $\phragmen$ and $\eqphragmen$ stems only partly from the proportionality of the two rules, yet also from the fact that there is simply more money left in case competing projects are not funded. 

\begin{figure*}[t!]
    \centering 
\begin{subfigure}{0.3\textwidth}
\includegraphics[width=\textwidth]{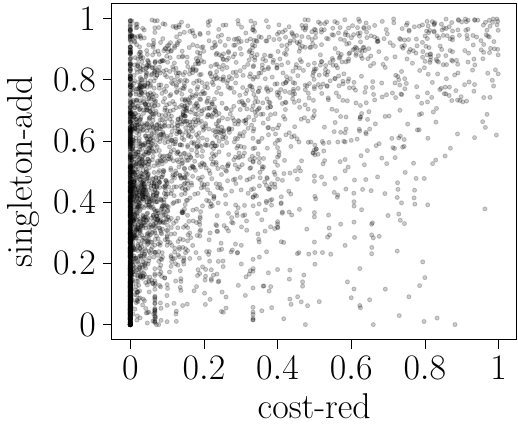}
  \caption{PCC $0.47$}\label{fig:ov_corrGreedyAV1}
\end{subfigure} \hfill
\begin{subfigure}{0.3\textwidth}
\includegraphics[width=\textwidth]{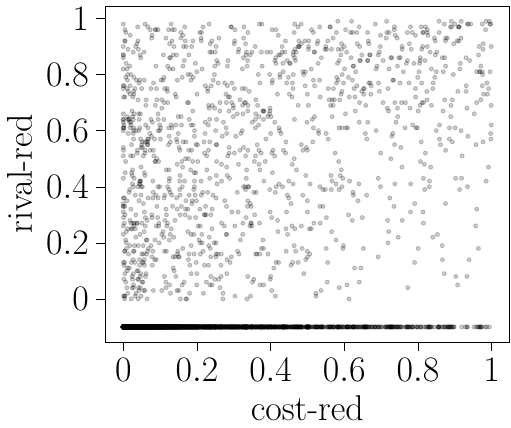}
  \caption{PCC $0.33$}\label{fig:ov_corrGreedyAV2}
\end{subfigure}\hfill
\begin{subfigure}{0.3\textwidth}
\includegraphics[width=\textwidth]{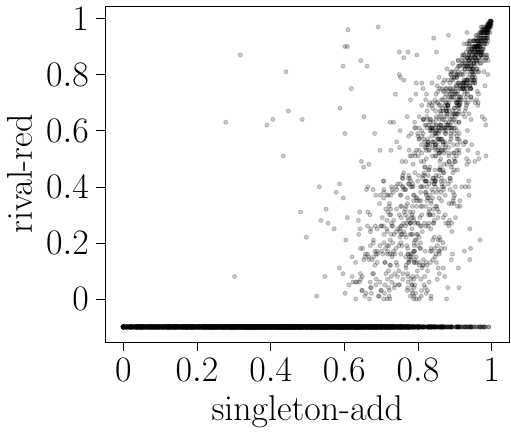}
  \caption{PCC $0.68$}\label{fig:ov_corrGreedyAV3}
\end{subfigure}
\caption{For $\greedyav$, correlation plots where each point is one project. A negative value of $\rivalred$ means that for all considered values of $\ell$ removing rivalry approvals from $\ell$ supporters selected uniformly at random was not sufficient for a $50\%$-funding probability.} 
\label{fig:ov_corrGreedyAV}
\end{figure*} 

\begin{figure*}[t!]
    \centering 
\begin{subfigure}[t]{0.24\textwidth}
\includegraphics[width=\textwidth]{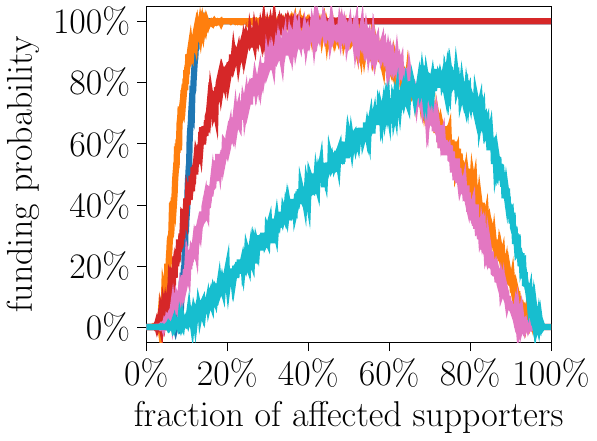}
  \caption{Lodz 2020 (Baluty Doly)}
\end{subfigure} \hfill
\begin{subfigure}[t]{0.24\textwidth}
\includegraphics[width=\textwidth]{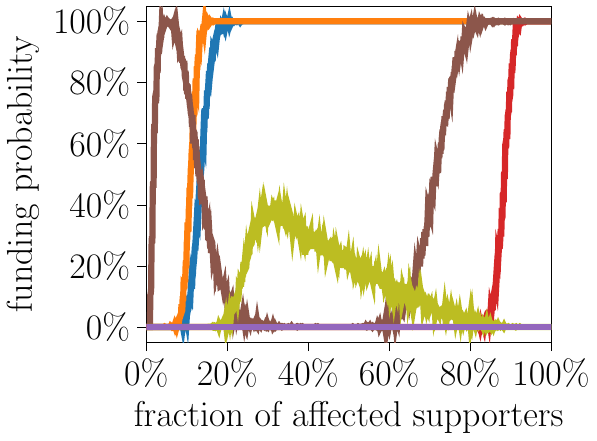}
  \caption{Warszawa 2017 (Choszczowka Dabrowka Szlachecka Bialoleka Dworska Henrykow Szamocin)}
\end{subfigure} \hfill
\begin{subfigure}[t]{0.24\textwidth}
\includegraphics[width=\textwidth]{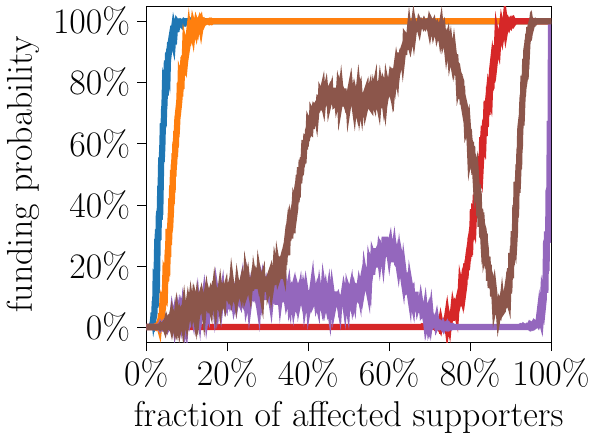}
  \caption{Warszawa 2017 (Saska Kepa)}
\end{subfigure}\hfill
\begin{subfigure}[t]{0.24\textwidth}
\includegraphics[width=\textwidth]{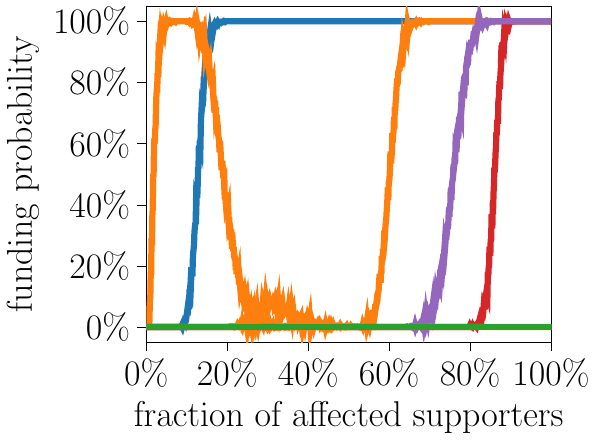}
  \caption{Warszawa 2018 (Bialoleka Obszar 2)}
\end{subfigure}\hfill
\begin{subfigure}[t]{0.24\textwidth}
\includegraphics[width=\textwidth]{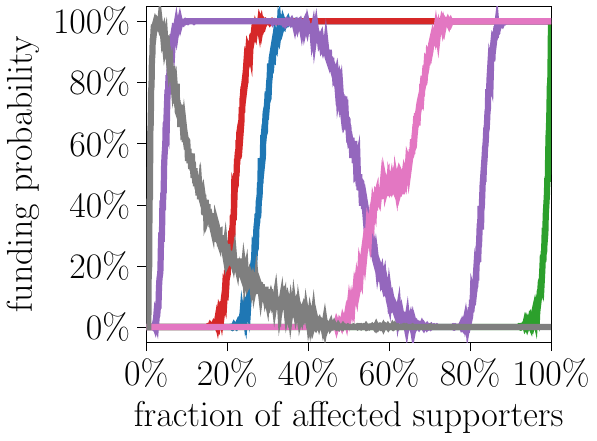}
  \caption{Warszawa 2018 (Goclaw)} \label{fig:Goclaw}\end{subfigure}\hfill
\begin{subfigure}[t]{0.24\textwidth}
\includegraphics[width=\textwidth]{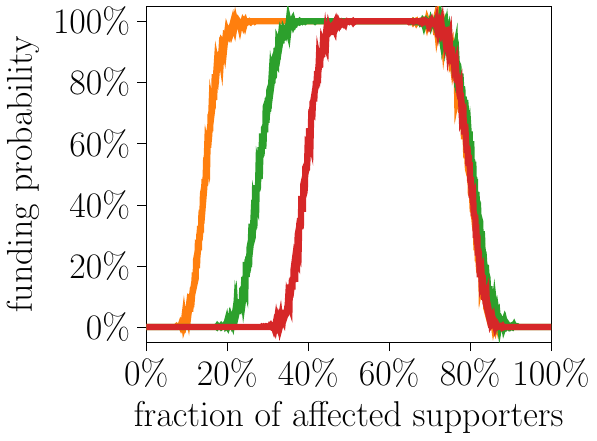}
  \caption{Warszawa 2018 (Grochow Centrum)}
\end{subfigure}
\hfill
\begin{subfigure}[t]{0.24\textwidth}
\includegraphics[width=\textwidth]{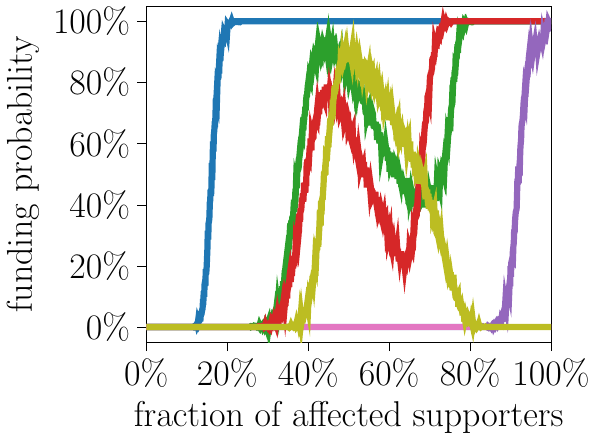}
  \caption{Warszawa 2018 (Szczesliwice)}
\end{subfigure}
\hfill
\begin{subfigure}[t]{0.24\textwidth}
\includegraphics[width=\textwidth]{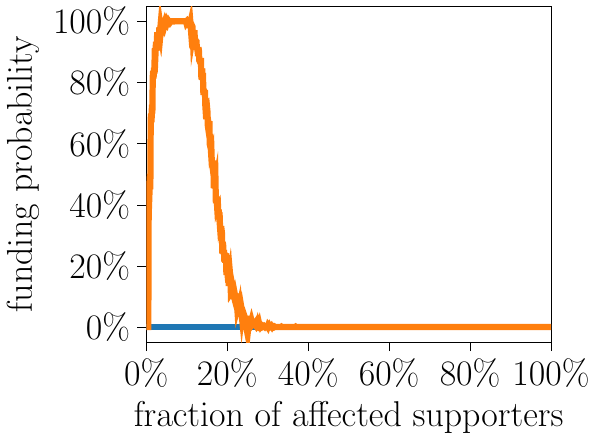}
  \caption{Warszawa 2019 (Obszar 2 Sady Zoliborskie Zatrasie Rudawka)}
\end{subfigure}
\caption{For $\greedyav$, line plots showing how the funding probability of projects develops if we remove rivalry approvals from subgroups of their supporters selected uniformly at random.} 
\label{fig:50_app_AV}
\end{figure*} 

\clearpage
\clearpage


\begin{thebibliography}{26}
\providecommand{\natexlab}[1]{#1}
\providecommand{\url}[1]{\texttt{#1}}
\expandafter\ifx\csname urlstyle\endcsname\relax
  \providecommand{\doi}[1]{doi: #1}\else
  \providecommand{\doi}{doi: \begingroup \urlstyle{rm}\Url}\fi

\bibitem[Baumeister et~al.(2021)Baumeister, Boes, and
  Hillebrand]{bau-boe-hil:c:pb-manip}
Dorothea Baumeister, Linus Boes, and Johanna Hillebrand.
\newblock Complexity of manipulative interference in participatory budgeting.
\newblock In \emph{Proceedings of ADT-2021}, pages 424--439. Springer, 2021.

\bibitem[Boehmer et~al.(2021)Boehmer, Bredereck, Faliszewski, and
  Niedermeier]{boe-bre-fal-nie:c:counting-bribery}
N.~Boehmer, R.~Bredereck, P.~Faliszewski, and R.~Niedermeier.
\newblock Winner robustness via swap- and shift-bribery: Parameterized counting
  complexity and experiments.
\newblock In \emph{Proceedings of IJCAI-2021}, pages 52--58, 2021.

\bibitem[Boehmer et~al.(2022)Boehmer, Bredereck, Faliszewski, and
  Niedermeier]{boe-bre-fal-nie:c:robustness-single-winner}
N.~Boehmer, R.~Bredereck, P.~Faliszewski, and R.~Niedermeier.
\newblock A quantitative and qualitative analysis of the robustness of
  (real-world) election winners.
\newblock In \emph{Proceedings of EAAMO-2022}, pages 7:1--7:10, 2022.

\bibitem[Boehmer et~al.(2023)Boehmer, Faliszewski, Janeczko, and
  Kaczmarczyk]{boe-fal-jan-kac:c:pb-robustness}
N.~Boehmer, P.~Faliszewski, L.~Janeczko, and A.~Kaczmarczyk.
\newblock Robustness of participatory budgeting outcomes: Complexity and
  experiments.
\newblock In \emph{Proceedings of SAGT-2023}, 2023.

\bibitem[Brill et~al.(2017)Brill, Freeman, Janson, and
  Lackner]{bri-fre-jan-lac:c:phragmen}
M.~Brill, R.~Freeman, S.~Janson, and M.~Lackner.
\newblock Phragm{\'{e}}n's voting methods and justified representation.
\newblock In \emph{Proceedings of AAAI-2017}, pages 406--413, 2017.

\bibitem[Brill et~al.(2022)Brill, Forster, Lackner, Maly, and
  Peters]{bri-for-lac-mal-pet:c:prop-approval-pb}
Markus Brill, Stefan Forster, Martin Lackner, Jan Maly, and Jannik Peters.
\newblock Proportionality in approval-based participatory budgeting.
\newblock In \emph{Proceedings of AAAI-2023}, pages 398--404, 2022.

\bibitem[Cabannes(2004)]{cab:j:participatory-budgeting}
Y.~Cabannes.
\newblock Participatory budgeting: {A} significant contribution to
  participatory democracy.
\newblock \emph{Environment and Urbanization}, 16\penalty0 (1):\penalty0
  27--46, 2004.

\bibitem[Cary(2011)]{car:c:stv-margin-of-victory}
David Cary.
\newblock Estimating the margin of victory for instant-runoff voting.
\newblock 2011.
\newblock Presented at 2011 Electronic Voting Technology Workshop/Workshop on
  Trushworthy Elections.

\bibitem[Fairstein et~al.(2023)Fairstein, Benad{\`{e}}, and
  Gal]{fai-ben-gal:c:pb-formats}
R.~Fairstein, G.~Benad{\`{e}}, and K.~Gal.
\newblock Participatory budgeting designs for the real world.
\newblock In \emph{Proceedings of AAAI-2023}, pages 5633--5640, 2023.

\bibitem[Faliszewski and Rothe(2015)]{fal-rot:b:control-bribery}
P.~Faliszewski and J.~Rothe.
\newblock Control and bribery in voting.
\newblock In F.~Brandt, V.~Conitzer, U.~Endriss, J.~Lang, and A.~D. Procaccia,
  editors, \emph{Handbook of Computational Social Choice}, chapter~7. Cambridge
  University Press, 2015.

\bibitem[Faliszewski et~al.(2009)Faliszewski, Hemaspaandra, and
  Hemaspaandra]{fal-hem-hem:j:bribery}
P.~Faliszewski, E.~Hemaspaandra, and L.~Hemaspaandra.
\newblock How hard is bribery in elections?
\newblock \emph{Journal of Artificial Intelligence Research}, 35:\penalty0
  485--532, 2009.

\bibitem[Faliszewski et~al.(2017)Faliszewski, Skowron, and
  Talmon]{fal-sko-tal:c:bribery-multiwinner-success}
P.~Faliszewski, P.~Skowron, and N.~Talmon.
\newblock Bribery as a measure of candidate success: {Complexity} results for
  approval-based multiwinner rules.
\newblock In \emph{Proceedings of AAMAS-17}, pages 6--14, 2017.

\bibitem[Faliszewski et~al.(2023)Faliszewski, Flis, Peters, Pierczyński,
  Skowron, Stolicki, Szufa, and
  Talmon]{fal-fli-pet-pie-sko-sto-szu-tal:c:pabulib}
Piotr Faliszewski, Jarosław Flis, Dominik Peters, Grzegorz Pierczyński, Piotr
  Skowron, Dariusz Stolicki, Stanisław Szufa, and Nimrod Talmon.
\newblock Participatory budgeting: Data, tools and analysis.
\newblock In \emph{Proceedings of IJCAI-2023}, pages 2667--2674, 8 2023.

\bibitem[Goel et~al.(2019)Goel, Krishnaswamy, Sakshuwong, and
  Aitamurto]{goe-kri-sak-ait:c:knapsack-voting}
A.~Goel, A.~Krishnaswamy, S.~Sakshuwong, and T.~Aitamurto.
\newblock Knapsack voting for participatory budgeting.
\newblock \emph{ACM Transactions on Economics and Computation}, 7\penalty0
  (2):\penalty0 1--27, 2019.

\bibitem[Gonzalez(1985)]{gon:j:x3c}
T.~Gonzalez.
\newblock Clustering to minimize the maximum intercluster distance.
\newblock \emph{Theoretical Computer Science}, 38:\penalty0 293--306, 1985.

\bibitem[Hemaspaandra et~al.(2007)Hemaspaandra, Hemaspaandra, and
  Rothe]{hem-hem-rot:j:destructive-control}
E.~Hemaspaandra, L.~Hemaspaandra, and J.~Rothe.
\newblock Anyone but him: {The} complexity of precluding an alternative.
\newblock \emph{Artificial Intelligence}, 171\penalty0 (5--6):\penalty0
  255--285, 2007.

\bibitem[Karp(1972)]{Karp1972}
Richard~M. Karp.
\newblock Reducibility among combinatorial problems.
\newblock In Raymond~E. Miller, James~W. Thatcher, and Jean~D. Bohlinger,
  editors, \emph{Complexity of Computer Computations}, pages 85--103, Boston,
  MA, 1972. Springer.
\newblock ISBN 978-1-4684-2001-2.
\newblock \doi{10.1007/978-1-4684-2001-2_9}.

\bibitem[Lackner and Skowron(2023)]{lac-sko:b:approval-survey}
M.~Lackner and P.~Skowron.
\newblock \emph{Multi-Winner Voting with Approval Preferences}.
\newblock Springer Briefs in Intelligent Systems. Springer, 2023.

\bibitem[{Lenstra, Jr.}(1983)]{len:j:integer-fixed}
H.~{Lenstra, Jr.}
\newblock Integer programming with a fixed number of variables.
\newblock \emph{Mathematics of Operations Research}, 8\penalty0 (4):\penalty0
  538--548, 1983.

\bibitem[Los et~al.(2022)Los, Christoff, and Grossi]{los-chr-gro:c:phragmen-pb}
M.~Los, Z.~Christoff, and D.~Grossi.
\newblock Proportional budget allocations: Towards a systematization.
\newblock In \emph{Proceedings of IJCAI-2022}, pages 398--404, 2022.

\bibitem[Magrino et~al.(2011)Magrino, Rivest, Shen, and
  Wagner]{mag-riv-she-wag:c:stv-bribery}
T.~Magrino, R.~Rivest, E.~Shen, and D.~Wagner.
\newblock Computing the margin of victory in {IRV} elections.
\newblock Presented at 2011 Electronic Voting Technology Workshop/Workshop on
  Trushworthy Elections, August 2011.

\bibitem[Peters and Skowron(2020)]{pet-sko:c:welfarism-mes}
D.~Peters and P.~Skowron.
\newblock Proportionality and the limits of welfarism.
\newblock In \emph{Proceedings of EC-20}, pages 793--794, 2020.

\bibitem[Peters et~al.(2021)Peters, Pierczynski, and
  Skowron]{pet-pie-sko:c:pb-mes}
D.~Peters, G.~Pierczynski, and P.~Skowron.
\newblock Proportional participatory budgeting with additive utilities.
\newblock In \emph{Proceedings of NeurIPS-2021}, pages 12726--12737, 2021.

\bibitem[Rey and Maly(2023)]{rey-mal:t:pb-survey}
S.~Rey and J.~Maly.
\newblock The (computational) social choice take on indivisible participatory
  budgeting.
\newblock Technical Report arXiv.2303.00621, arXiv, 2023.

\bibitem[Xia(2012)]{xia:margin-of-victory}
L.~Xia.
\newblock Computing the margin of victory for various voting rules.
\newblock In \emph{Proceedings of EC-12}, pages 982--999, June 2012.

\bibitem[Yang(2020)]{yan:c:approval-multiwinner-destructive-bribery}
Y.~Yang.
\newblock On the complexity of destructive bribery in approval-based
  multi-winner voting.
\newblock In \emph{Proceedings of AAMAS-2020}, pages 1584--1592, 2020.

\end{thebibliography}
\end{document}